\newcommand{\myproof}[1]{
\ifthenelse{\boolean{withproofs}}{#1}{}
}
\newcommand{\la}[1]{\lambda #1.}
\newcommand{\tm}{t}
\newcommand{\tmtwo}{s}
\newcommand{\tmthree}{u}
\newcommand{\tmfour}{r}
\newcommand{\tmfive}{p}
\newcommand{\var}{x}
\newcommand{\vartwo}{y}
\newcommand{\rootRew}[1]{\mapsto_{#1}}
\newcommand{\Rew}[1]{\rightarrow_{#1}}
\newcommand{\Rewn}[1]{\rightarrow_{#1}^*}
\renewcommand{\to}{\Rew{}}
\newcommand{\tob}{\Rew{\beta}}
\newcommand{\rtob}{\rootRew{\beta}}
\newcommand{\lRew}[1]{\; \mbox{}_{#1}{\leftarrow}\ }
\newcommand{\lRewn}[1]{\; \mbox{}^{*}_{#1}{\leftarrow}\ }
\newcommand{\ctxholep}[1]{\langle #1\rangle}
\newcommand{\ctxhole}{\ctxholep{\cdot}}
\newsavebox{\@brx}
\newcommand{\llangle}[1][]{\savebox{\@brx}{\(\m@th{#1\langle}\)}%
  \mathopen{\copy\@brx\kern-0.7\wd\@brx\usebox{\@brx}}}
\newcommand{\rrangle}[1][]{\savebox{\@brx}{\(\m@th{#1\rangle}\)}%
  \mathclose{\copy\@brx\kern-0.7\wd\@brx\usebox{\@brx}}}
\newcommand{\ctx}{C}
\newcommand{\ctxtwo}{D}
\newcommand{\ctxthree}{E}
\newcommand{\ctxp}[1]{\ctx\ctxholep{#1}}
\newcommand{\nbvctxtwo}[1]{\nbvctxtwo{#1}}
\newcommand{\defeq}{:=}
\newcommand{\grameq}{::=}
\newcommand{\isub}[2]{\{#1{\shortleftarrow}#2\}}
\newcommand{\llbrace}{\{ \kern -0.27em \vert}
\newcommand{\rrbrace}{\vert \kern -0.27em \}}
\renewcommand{\l}{\lambda}
\newcommand{\ie}{{\em i.e.}\xspace}
\newcommand{\ih}{{\textit{i.h.}}\xspace}
\newcommand{\ignore}[1]{}
\newcommand{\colspace}{@{\hspace{.5cm}}}
\newcommand{\myinput}[1]{\ifthenelse{\boolean{withimages}}{\input{#1}}{}}
\newcommand{\reflemmap}[2]{Lemma~\ref{l:#1}.\ref{p:#1-#2}}
\newcommand{\refequation}[1]{(\ref{eq:#1})}
\newcommand{\set}[1]{\{#1\}}
\newcommand{\todist}{\Rew{\dist}}
\newcommand{\size}[1]{|#1|}
\newcommand{\LO}{{LO}\xspace}
\newcommand{\pair}[2]{(#1,#2)}
\renewcommand{\pair}[2]{\langle#1,#2\rangle}
\newcommand{\tolo}{\Rew{LO}}
\newcommand{\withproofs}[1]{\ifthenelse{\boolean{withproofs}}{#1}{}}
\newcommand{\withoutproofs}[1]{\ifthenelse{\boolean{withproofs}}{}{#1}}
\newcounter{numberone}
\newcommand\SN{\ensuremath{\mathsf{SN}}}
\newcommand\interp[1]{\ensuremath{\llbracket #1\rrbracket}}
\newcommand\Red[1]{\ensuremath{\mathsf{Red}(#1)}}
\newcommand{\distsym}{\mathsf{dist}}
\newcommand{\distl}{\l_{\distsym}}
\newcommand{\paapsym}{@_\times}
\newcommand{\prabsym}{\pi_\lambda}
\newcommand{\rtopri}{\rootRew{\pi_i}}
\newcommand{\rtopaap}{\rootRew{\paapsym}}
\newcommand{\rtoprab}{\rootRew{\prabsym}}
\newcommand{\topri}{\Rew{\pi_i}}
\newcommand{\toprab}{\Rew{\prabsym}}
\renewcommand{\todist}{\Rew{\mathsf{dist}}}
\renewcommand{\size}[1]{\mathsf{size}(#1)}
\newcommand{\eval}[1]{\mathsf{eval}(#1)}
\newcommand\xrecap[4]{{\noindent\sffamily\bfseries}{\bf #1 \ref{#3} (#2)}{\bf.} {\em #4}}
\begin{document}
\title{Functional Pearl: the Distributive $\l$-Calculus}

\author{Beniamino Accattoli\inst{1}%\orcidID{0000-0003-4944-9944}
   \and
   Alejandro D\'{\i}az-Caro\inst{2,3,}%\orcidID{0000-0002-5175-6882}
}

\authorrunning{Accattoli \& D\'{\i}az-Caro}
\institute{Inria \& LIX, École Polytechnique, UMR 7161, Palaiseau, France
  \email{beniamino.accattoli@inria.fr}
  \and
  CONICET-Universidad de Buenos Aires. Instituto de Ciencias de la
  Computaci\'on. Buenos Aires, Argentina.
  \and
  Departamento de Ciencia y Tecnolog\'{\i}a. Universidad Nacional de Quilmes,\\
  Bernal, BA, Argentina\\
  \email{adiazcaro@icc.fcen.uba.ar}}

\maketitle              % typeset the header of the contribution

\begin{abstract}
We introduce a simple extension of the $\lambda$-calculus with pairs---called the distributive $\lambda$-calculus---obtained by adding a computational interpretation of the valid distributivity isomorphism $A \Rightarrow (B\wedge C)\ \ \equiv\ \  (A\Rightarrow B) \wedge (A\Rightarrow C)$ of simple types. 
We study the calculus both as an untyped and as a simply typed setting. Key features of the untyped calculus are confluence, the absence of clashes of constructs, that is, evaluation never gets stuck, and a leftmost-outermost normalization theorem, obtained with straightforward proofs. 
With respect to simple types, we show that the new rules satisfy subject reduction if types are considered up to the distributivity isomorphism. The main result is strong normalization for simple types up to distributivity. The proof is a smooth variation over the one for the $\l$-calculus with pairs and simple types.

\keywords{$\l$-calculus, type isomorphisms, rewriting, normalization}
\end{abstract}

\section{Introduction}
The topic of this paper is an extension of the $\l$-calculus with pairs, deemed the \emph{distributive $\l$-calculus}, obtained by adding a natural computational interpretation of the distributivity isomorphism of simple types:
\begin{equation}
A \Rightarrow (B\wedge C)\ \ \equiv\ \  (A\Rightarrow B) \wedge (A\Rightarrow C)
\label{eq:distributivity}
\end{equation}
Namely, one extends the calculus with the following commutation rules:
\[\begin{array}{c\colspace \colspace \colspace c\colspace cc}
\pair\tm\tmtwo \tmthree \to \pair{\tm\tmthree}{\tmtwo\tmthree} & \pi_i (\la\var\tm) \to \la\var \pi_i\tm& i = 1,2
\end{array}\]
The aim of this paper is showing that the distributive $\l$-calculus is a natural system, and contributions are in both the typed and untyped settings. 

We study the untyped setting to show that our calculus makes perfect sense also without types. This is to contrast with System I, another calculus providing computational interpretations of type isomorphisms recently introduced by D\'{\i}az-Caro and Dowek \cite{DBLP:conf/rta/Diaz-CaroD19}, that does not admit an untyped version---the relationship between the two is discussed below.  

\paragraph{Typing Up to Distributivity and Subject Reduction.} At the typed level, the key point is that simple types are here considered \emph{up to distributivity}. In this way, the apparently ad-hoc new rules do satisfy the subject reduction property. 

Consider for instance $\pi_1 (\la\var\tm)$: working up to the distributivity
isomorphism---so that isomorphic types type the same terms---the subterm
$\la\var\tm$ may now have both the arrow type $A \Rightarrow (B\wedge C)$ and the conjunctive type $(A\Rightarrow B) \wedge (A\Rightarrow C)$, so that $\pi_1 (\la\var\tm)$ can be typed with $A\Rightarrow B$. Distributivity also allows for the type to be preserved---that is, subject reduction holds. According to the arrow type, indeed, the body $\tm$ of the abstraction has type $B\wedge C$ and thus the reduct of the commutation rule $\pi_1 (\la\var\tm) \to \la\var \pi_1\tm$ can also be typed with $A\Rightarrow B$. The other commutation rule can be typed similarly.

\paragraph{Overview of the Paper.} For the untyped setting, we show that the distributive
$\l$-calculus is confluent, its closed normal forms are values, and it has a
leftmost-outermost normalization theorem, exactly as for the $\l$-calculus
(without pairs).

With respect to types, we show subject reduction and strong normalization of the distributive $\l$-calculus with simple types up to distributivity. 

\paragraph{The Pearl.} The proofs in the paper are remarkably smooth. The properties for the untyped calculus are immediate. Confluence follows by the fact that the calculus is an orthogonal higher-order rewriting system \cite{AczelHO,phdklop,DBLP:conf/lics/Nipkow91}. The leftmost-outermost normalization theorem, similarly, follows by an abstract result by van Ramsdonk \cite{DBLP:conf/tlca/Raamsdonk97}, because the calculus verifies two additional properties of orthogonal higher-order rewriting system from which leftmost-outermost normalization follows. Finally, the fact that closed normal forms are values---what we call \emph{progress}---is obtained via a straightforward induction.

For the typed setting, the given argument for subject reduction goes smoothly through. The main result of the paper is that the simply typed distributive
$\l$-calculus is strongly normalizing. The proof follows
Tait's reducibility method. In particular, the interpretation of types is the
same at work for the $\l$-calculus with pairs and projections (that is, without
distributive rules). The key point is to prove that the two sides of the
distributivity isomorphism have the same interpretation. This can be proved with
two easy lemmas. Everything else is as in the case without distributive rules.

\paragraph{Type Isomorphisms and System I.}
As shown by Bruce, Di Cosmo and Longo \cite{BruceDiCosmoLongoMSCS92}
the isomorphisms of simple types can be completely characterized by
distributivity (that is, equation \refequation{distributivity}) plus the following three (for more about type isomorphisms see Di Cosmo's short survey \cite{DiCosmoMSCS05} or book \cite{DiCosmo95}):
\[\begin{array}{r\colspace rll}
\mathsf{Commutativity} & A \wedge B & \equiv &  B\wedge A
\\
\mathsf{Associativity} & (A \wedge B) \wedge C & \equiv & A\wedge (B\wedge C)
\\
\mathsf{Currying} & (A \wedge B) \Rightarrow C & \equiv & A\Rightarrow (B\Rightarrow C)
\end{array}\]
At the inception of D\'{\i}az-Caro and Dowek's System I \cite{DBLP:conf/rta/Diaz-CaroD19}, there is the idea of turning all these type isomorphisms into computational
principles. Precisely, these isomorphisms give rise to some equations $\tm \sim \tmtwo$ between terms, such as $\pair\tm\tmtwo \sim \pair\tmtwo\tm$ for the commutativity of conjunctions, for instance. The result of D\'{\i}az-Caro and Dowek is that the $\l$-calculus with pairs extended with 5 such equations (distributivity induces 2 equations) is strongly normalizing modulo. 

\paragraph{System I Rests on Types.} The equations of System I, while well behaved with respect to termination, come with two drawbacks. First, the calculus is not confluent. Second, the definitions of the rewriting rules and of the equations depend on types, so that it is not possible to consider an untyped version. Both issues are easily seen considering the commutativity equation. Consider $\tm = \pi_1 \pair\tmtwo\tmthree$. If pairs are commutative, $\tm$ can rewrite to both $\tmtwo$ and $\tmthree$:
$$\tmtwo \ \lRew{}\ \pi_1 \pair\tmtwo\tmthree \sim \pi_1 \pair\tmthree\tmtwo \ \to\  \tmthree$$ 
which breaks both confluence and subject reduction (if $\tmtwo$ has type $A$ and $\tmthree$ has type $B$). To recover subject reduction, one uses a projection $\pi_A$ indexed by a type rather than a coordinate so that (if $\tmtwo$ has type $A$ and $\tmthree$ has type $B$):
$$\tmtwo \ \lRew{}\ \pi_A \pair\tmtwo\tmthree \sim \pi_A \pair\tmthree\tmtwo \ \to\  \tmtwo$$ 
note that in order to apply the rule we need to know the type of $\tmtwo$.
Moreover, confluence is not recovered---if both $\tmtwo$ and $\tmthree$ have
type $A$ then the result may non-deterministically be $\tmtwo$ or $\tmthree$,
according to System I. D\'{\i}az-Caro and Dowek in \cite{DBLP:conf/rta/Diaz-CaroD19} indeed adopt a sort of \emph{proof-irrelevant} point of view, for which subject reduction is more important than confluence for normalization: types guarantee the existence of a result (strong normalization), and this guarantee is stable by evaluation (subject reduction), while uniqueness of the result is abandoned (no confluence).

\paragraph{System I and the Distributive $\l$-Calculus.} The two issues of System I are not due only to the 
commutativity isomorphism, as the currying and associativity isomorphisms also contribute to them. The distributive 
$\l$-calculus essentially restricts System I by keeping only the distributive isomorphism, which is the only one not 
hindering confluence and the possibility of defining the calculus independently from the type system. 

To be precise, we do not simply restrict to distributivity, but we also change
its computational interpretation. First, we do not consider
equations, but rewriting rules, and also we consider the rule $\pi_i
(\la\var\tm) \to \la\var \pi_i\tm$ that was not part of System
I\footnote{Such a rule was however present in an early version of
  System I, see~\cite{DiazcaroDowekDCM13}.}, while we remove both equations:
\[\begin{array}{c\colspace \colspace \colspace c\colspace cc}
\la\var\pair\tm\tmtwo \sim \pair{\la\var\tm}{\la\var\tmtwo} & \pi_i (\tm \tmtwo) \sim \la\var (\pi_i\tm) \tmtwo& i = 1,2
\end{array}\]
The main reason is that they would make much harder to establish confluence of the calculus, because they introduce various critical pairs---the distributive $\l$-calculus is instead trivially confluent, because it is an orthogonal higher-order rewriting system, and all such systems are confluent. 

To sum up, System I aims at being a maximal enrichment of the $\l$-calculus with computation principles induced by type isomorphisms, while the distributive $\l$-calculus rather is a minimal extension aiming at being as conservative as possible with respect to the $\l$-calculus, and in particular at being definable without types.

\paragraph{Clashes.} Let us point out a pleasant by-product of the distributive rewriting rules that we adopt. A nice property of the $\l$-calculus is that there can never be \emph{clashes} of constructors. In logical terms, there is only one introduction rule (corresponding to the abstraction constructor) and only one elimination rule (application) and they are duals, that is, they interact via $\beta$-reduction. Extensions of the $\l$-calculus usually lack this property. Typically, extending the $\l$-calculus with pairs $\pair\tm\tmtwo$ (and of course projections $\pi_1\tm$ and $\pi_2 \tm$) introduces the following two clashes:
%\[\begin{array}{c\colspace \colspace \colspace c\colspace cc}
$\pair\tm\tmtwo \tmthree$ and % & 
$\pi_i (\la\var\tm)$, for % &
$i = 1,2$,
%\end{array}\]
where an elimination constructor (application or projection) is applied to the wrong introduction rule (pair or abstraction). These clashes are stuck, as there are no rules to remove them, and it is not clear whether it makes any sense to consider such an unrestricted $\l$-calculus with pairs. 

Our distributive rules deal exactly with these clashes, removing them by commuting constructors. Concretely, the absence of clashes materializes as a \emph{progress} property: all closed normal forms are values, that is, their outermost constructor corresponds to an introduction rule.

\paragraph{Related work.} Beyond D\'{\i}az-Caro and Dowek's System I, we are aware of only three works bearing some 
analogies to ours. The first one is Arbiser, Miquel, and R{\'{\i}}os' $\l$-calculus with constructors 
\cite{DBLP:journals/jfp/ArbiserMR09}, where the $\l$-calculus is extended with constructors and a pattern matching 
construct that commutes with applications. They show it to be confluent and even having a separation theorem akin to 
Bohm's. The calculus has been further studied in a typed setting by Petit \cite{DBLP:journals/corr/abs-1009-3429}, but 
type isomorphisms play no role in this case.

The second related work is A{\"\i}t-Kaci and Garrigue's label-selective
$\lambda$-calculus \cite{AITKACI1995353}, which considers the $\lambda$-calculus
plus the only type isomorphism for the implication: $A\Rightarrow B\Rightarrow
C\equiv B\Rightarrow A\Rightarrow C$\footnote{With conjunction, this isomorphism
  is a
  consequence of currying and commutativity.}. In order to avoid losing
confluence and subject reduction, they introduce a labeling system to the
arguments, so that the application order becomes irrelevant.

Last, the untyped distributive $\lambda$-calculus coincides with the extensionality-free fragment of St{\o}vring's  
$\lambda_{FP}$ \cite{Sto06}. St{\o}vring uses it as a technical tool to study confluence and conservativity of 
surjective pairing. He points out---as we do---that the calculus is confluent because it is an 
orthogonal higher-order rewriting system, but then he gives nonetheless a proof using Tait-Martin 
L\"of's technique.

\section{The Untyped Distributive \texorpdfstring{$\l$}{lambda}-Calculus}
The language of the distributive $\l$-calculus $\distl$ is given by the
following grammar:
\[\begin{array}{r\colspace rll}
  \textsf{Terms} & \tm,\tmtwo,\tmthree &\ \ \grameq\ \  &\var \mid \la\var\tm\mid \tm\tmtwo\mid \pair\tm\tmtwo \mid \pi_1 \tm \mid\pi_2 \tm
\end{array}\]
The rewriting rules are first given at top level:
\[\begin{array}{r\colspace r@{\hspace{.2cm}}l@{\hspace{.2cm}}l\colspace l}
\multicolumn{4}{c}{\textsc{Rules at top level}}
\\
\textsf{Standard rules} &   (\lambda x.\tm)\tmtwo &\rtob &\tm\isub\var\tmtwo
\\
  &
  \pi_i \pair{\tm_1}{\tm_2} & \rtopri& \tm_i & i=1,2
  \\\\
\textsf{Distributive rules} &   \pair \tm  \tmtwo \tmthree &\rtopaap &\pair {\tm\tmthree}{\tmtwo\tmthree}
\\
  &
  \pi_i(\lambda x.\tm) &\rtoprab& \lambda x.\pi_i\tm & i=1,2
  \end{array}\]
Then, we extend them to be applied wherever in a term. We formulate such an extension using  contexts, that are terms where exactly one subterm has been replaced with a hole $\ctxhole$:
\[\begin{array}{r\colspace rll}
  \textsf{Contexts} & \ctx,\ctxtwo,\ctxthree &\grameq &\ctxhole \mid \la\var\ctx \mid \ctx\tm \mid \tm\ctx \mid \pair\ctx\tm \mid \pair\tm\ctx \mid \pi_1 \ctx\mid\pi_2 \ctx
\end{array}\]
The operation of replacing the hole $\ctxhole$ of a context $\ctx$ with a given term $\tm$ is called \emph{plugging} and it is noted $\ctxp\tm$. As usual, plugging can capture variables. Now we can define the contextual closure of the top level rules.
\begin{center}
\begin{tabular}{c}
\textsc{Contextual closure}\\ 
\infer[a \in \set{\beta, \pi_1,\pi_2,@_\times,\pi_\lambda}]{\ctxp\tm \Rew{a} \ctxp\tmtwo}{\tm\rootRew{a}\tmtwo}  
\end{tabular}
\end{center}
The contextual closure is given with contexts as a compact way of expressing the closure of all rules by all constructors, in the proofs sometimes we consider the closure by a single constructor. We use $\todist$ for the union of all the rewriting rules defined above.

\paragraph{Values and Neutral Terms.} Two subsets of terms play a special role in the following, terms whose outermost constructor corresponds to a logical introduction rule (values) and elimination rule (neutral terms), plus---in both cases---variables.
\begin{definition}[Values and neutral terms] 
\hfill
\begin{itemize}
  \item \emph{Values}: a term is \emph{value} if it is either a variable $\var$, an abstraction $\la\var\tm$, or a pair $\pair\tm\tmtwo$.
  \item \emph{Neutral terms}: a term is \emph{neutral} if it is either a variable $\var$, an application $\tm\tmtwo$, or a projection $\pi_i\tm$. 
\end{itemize}
\end{definition}
Sometimes, neutral terms are also required to be normal. Here they are not.

\paragraph{Progress.} The first property that we show is that all closed normal
forms are \emph{values}. Please note that evaluation is not call-by-value, here
the aim is simply to stress that in the distributive $\l$-calculus there are no
clashes, \ie closed-normal 
neutral terms.

\begin{proposition}[Progress]\label{prop:progress}
  If $\tm$ is a closed normal form then it is a value.
\end{proposition}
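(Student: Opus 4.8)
The plan is to prove the statement by a straightforward structural induction on the closed normal form $\tm$. The two facts I will use freely are that every subterm of a normal form is again a normal form -- so that the induction hypothesis always applies to immediate subterms -- and that closedness is inherited by the relevant subterms (in particular $\fv{\tm\tmtwo} = \fv\tm \cup \fv\tmtwo$ and $\fv{\pi_i\tm}=\fv\tm$, both immediate from the definition of $\fv\cdot$). The other observation is that the grammar of $\distl$ has exactly six shapes -- variable, abstraction, pair, application, and the two projections -- of which the first three are precisely the values; so it suffices to rule out, for a closed normal form, the three non-value shapes.

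Let then $\tm$ be a closed normal form. If $\tm$ is a variable the case is vacuous, since a variable is not closed. If $\tm = \la\var\tmtwo$ or $\tm = \pair\tmtwo\tmthree$ then $\tm$ is a value and we are done. There remain the two ``clash-prone'' cases. Suppose $\tm = \tmtwo\tmthree$: then $\tmtwo$ is closed and normal, hence a value by the \ih, so $\tmtwo$ is a variable, an abstraction, or a pair; it is not a variable, being closed; if $\tmtwo = \la\var\tmfour$ then $\tm = (\la\var\tmfour)\tmthree$ is a $\beta$-redex, and if $\tmtwo = \pair{\tmtwo_1}{\tmtwo_2}$ then $\tm = \pair{\tmtwo_1}{\tmtwo_2}\tmthree$ is a $\paapsym$-redex, so in both cases $\tm$ is not normal -- contradiction, and this case does not occur. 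The case $\tm = \pi_i\tmtwo$ is symmetric: $\tmtwo$ is closed and normal, hence a value by the \ih, and not a variable; if $\tmtwo = \la\var\tmthree$ then $\tm = \pi_i(\la\var\tmthree)$ is a $\prabsym$-redex, and if $\tmtwo = \pair{\tmtwo_1}{\tmtwo_2}$ then $\tm = \pi_i\pair{\tmtwo_1}{\tmtwo_2}$ is a $\pi_i$-redex, again contradicting normality. Hence $\tm$ must be an abstraction or a pair, that is, a value.

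I do not expect any genuine obstacle: the whole content is that the two distributive rules $\rtopaap$ and $\rtoprab$ are designed to fire exactly on the ``clash'' configurations -- a pair in function position, an abstraction underneath a projection -- so that a closed normal form cannot exhibit them, while the standard rules $\rtob$ and $\rtopri$ take care of the matching introduction/elimination pairs. The only mild care needed is the routine bookkeeping that subterms of normal forms are normal and that free variables propagate to subterms, both direct consequences of the definitions of the contextual closure and of $\fv\cdot$; no measure or induction on reductions is involved, just induction on the term.
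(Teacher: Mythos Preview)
Your proof is correct and follows essentially the same approach as the paper: a structural induction on $\tm$, with the application and projection cases ruled out because the immediate subterm is, by the induction hypothesis, a closed value, hence an abstraction or a pair, and in either case one of the four top-level rules fires. The only cosmetic difference is that you spell out explicitly why the subterm cannot be a variable, which the paper leaves implicit.
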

\begin{proof}
  By induction on $\tm$. Cases:
  \begin{itemize}
  \item \emph{Variable}: impossible, since $\tm$ is closed.
  \item \emph{Abstraction or pair}: then the statement holds.
  \item \emph{Application}, \ie $\tm=\tmtwo\tmthree$. Since $\tm$ is normal and closed, so is $\tmtwo$. Then, by \ih $\tmtwo$ is a value, that is, either an abstraction or a
    pair. In the first case,
    rule $\beta$ applies and in the second case rule $@_\times$ applies. Hence, in any case $\tm$ is not in normal
    form, absurd. Therefore, $\tm$ cannot be an application in normal form.
  \item \emph{Projection}, \ie $\tm=\pi_i\tmtwo$. Since $\tm$ is normal and closed, so is $\tmtwo$. Then, by \ih $\tmtwo$ is a value, that is, either an abstraction or a
    pair. In the first case, rule $\pi_\lambda$ applies and in the
    second case rule $\pi_i$ applies. Therefore, $\tm$ cannot be a projection in normal form.
    \qedhere
  \end{itemize}
\end{proof}

\paragraph{Substitution.} For the proof of strong normalization we shall need a basic property of substitution with respect to rewriting steps.

\begin{lemma}[Substitutivity of $\todist$]
\label{l:substitutivity} 
\hfill
 \begin{enumerate}
  \item \label{p:substitutivity-left}
  \emph{Left substitutivity}: if $\tm \todist \tm'$ then $\tm \isub\var\tmtwo \todist \tm'\isub\var\tmtwo$.
  
  \item \label{p:substitutivity-right}
  \emph{Right substitutivity}: if $\tmtwo \todist \tmtwo'$ then $\tm \isub\var\tmtwo \todist^* \tm\isub\var{\tmtwo'}$.
 \end{enumerate}
\end{lemma}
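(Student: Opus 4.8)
The plan is to push the substitutions through the reductions step by step, reducing everything to the behaviour of the top-level rules under substitution plus a routine propagation through the contextual closure. Throughout, by $\alpha$-conversion I assume that the bound variables occurring in the terms at hand are distinct from $\var$ and not free in $\tmtwo$ (Barendregt's convention), which makes substitution commute with every constructor.

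For \refpoint{substitutivity-left}, the first step is the auxiliary fact that every top-level rule is stable under substitution: if $\tm \rootRew{a} \tm'$ with $a \in \set{\beta,\pi_1,\pi_2,@_\times,\pi_\lambda}$, then $\tm\isub\var\tmtwo \rootRew{a} \tm'\isub\var\tmtwo$. This is checked rule by rule. For $\pi_i$, $@_\times$ and $\pi_\lambda$ it is immediate, because substitution commutes with pairing, application, projection and abstraction. For $\beta$, \ie $(\la\vartwo\tmthree)\tmfour \rootRew{\beta} \tmthree\isub\vartwo\tmfour$, it is exactly the substitution lemma of the $\l$-calculus, $(\tmthree\isub\vartwo\tmfour)\isub\var\tmtwo = (\tmthree\isub\var\tmtwo)\isub\vartwo{\tmfour\isub\var\tmtwo}$. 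Granting this, \refpoint{substitutivity-left} follows by a straightforward induction on the context closing the root step (equivalently, on $\tm$): the base case is the auxiliary fact, and each inductive case uses only the \ih and the commutation of substitution with the relevant constructor. In particular, if the redex happens to lie under a $\var$-binder of $\tm$, the substitution $\isub\var\tmtwo$ does not reach it and the step is copied verbatim on both sides.

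For \refpoint{substitutivity-right} the reflexive-transitive closure is genuinely needed, since $\var$ may have any number of occurrences in $\tm$, possibly zero. The proof is a direct induction on $\tm$. If $\tm = \var$ then $\tm\isub\var\tmtwo = \tmtwo \todist \tmtwo' = \tm\isub\var\tmtwo'$, a single step; if $\tm$ is any other variable the two instances coincide and zero steps suffice. If $\tm = \la\vartwo\tmthree$, the \ih gives $\tmthree\isub\var\tmtwo \todist^* \tmthree\isub\var\tmtwo'$ and closure of $\todist^*$ under abstraction concludes; the case $\tm = \pi_i\tmthree$ is analogous with $\pi_i$ in place of the abstraction. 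If $\tm = \tmthree\tmfour$ or $\tm = \pair\tmthree\tmfour$, apply the \ih to each of the two immediate subterms and concatenate the resulting sequences (rewriting, say, the copies coming from $\tmthree$ first, then those coming from $\tmfour$).

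No step is really hard here. The only point that needs a little attention is the $\beta$ case of the auxiliary fact, where one must invoke the $\l$-calculus substitution lemma; the rest is bookkeeping, smoothed by the systematic and harmless use of $\alpha$-conversion to keep bound variables clear of $\var$ and of the free variables of $\tmtwo$.
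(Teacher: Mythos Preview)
Your proposal is correct and follows essentially the same approach as the paper: for \refpoint{substitutivity-left} the paper also proceeds by induction on $\todist$, first handling each top-level rule (with the $\beta$ case resting on the substitution lemma) and then the contextual closure via $\ctxp\tm\isub\var\tmtwo = \ctx\isub\var\tmtwo\ctxholep{\tm\isub\var\tmtwo}$; for \refpoint{substitutivity-right} the paper likewise does a straightforward induction on $\tm$. One small remark: since you already invoke Barendregt's convention, the clause about the redex lying under an $\var$-binder is moot and can be dropped.
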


\begin{proof}
The first point is an easy induction on the relation $\todist$, the second one on $t$. Details in the Appendix.
\end{proof}

\paragraph{Confluence.} The distributive $\l$-calculus is an example of orthogonal higher-order rewriting system \cite{AczelHO,phdklop,DBLP:conf/lics/Nipkow91}, that is a class of rewriting systems for which confluence always holds, because of the good shape of its rewriting rules.

\begin{theorem}[Confluence]
 The distributive $\l$-calculus is confluent, that is, if $\tmtwo_1 \lRewn{\distsym} \tm \Rewn{\distsym} \tmtwo_2$ then there exists $\tmthree$ such that $\tmtwo_1 \Rewn{\distsym} \tmthree \lRewn{\distsym} \tmtwo_2$.
 \qed
\end{theorem}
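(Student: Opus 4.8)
The plan is to observe that the distributive $\lambda$-calculus belongs to the class of \emph{orthogonal higher-order rewriting systems}, for which confluence is a general theorem \cite{AczelHO,phdklop,DBLP:conf/lics/Nipkow91}. All that has to be checked is that the four rules satisfy the two conditions defining orthogonality: each rule is a left-linear \emph{pattern} rule, and there are no critical pairs between (renamed copies of) the left-hand sides. Once this is done, confluence of $\todist$ follows immediately, with no diagram chasing required.

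First I would write the four top-level rules in the standard higher-order format (using juxtaposition for application, as in the rest of the paper): $(\lambda x.Z\,x)\,Y\to Z\,Y$ for $\beta$; $\pi_i\langle Z_1,Z_2\rangle\to Z_i$ for the projection rule; $\langle Z_1,Z_2\rangle\,Y\to\langle Z_1\,Y,Z_2\,Y\rangle$ for $@_\times$; and $\pi_i(\lambda x.Z\,x)\to\lambda x.\pi_i(Z\,x)$ for $\pi_\lambda$. In each left-hand side every metavariable is applied only to the (possibly empty) list of bound variables in whose scope it occurs — $Z$ to the bound variable $x$, and $Y$, $Z_1$, $Z_2$ to nothing — so all left-hand sides are higher-order patterns in the sense of \cite{DBLP:conf/lics/Nipkow91}; and since no metavariable occurs twice on the left, all four rules are left-linear. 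Observe that the capture-allowing substitution $\tm\isub\var\tmtwo$ that the informal $\beta$ rule produces is exactly the higher-order instantiation $Z\,Y$, so this reformulation loses nothing.

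Next I would check that there are no critical pairs, which could only come from unifying one left-hand side with a proper non-variable subterm of another left-hand side, or with a renamed copy of itself. The four left-hand sides are shallow and pairwise distinguished by their root constructors: the two whose left-hand side is an application require the term in function position to be an abstraction (for $\beta$) or a pair (for $@_\times$), which do not unify; the two whose left-hand side is a projection require the argument to be a pair (for $\pi_i$) or an abstraction (for $\pi_\lambda$), which again do not unify; and an application and a projection differ already at the root. Moreover, no left-hand side unifies with a proper non-variable subterm of another left-hand side, because the only such subterms occurring anywhere are of the form $\lambda x.Z\,x$ or $\langle Z_1,Z_2\rangle$ — neither headed by an application or a projection — while no left-hand side is headed by an abstraction or a pair. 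Hence the system is orthogonal, and by the confluence theorem for orthogonal higher-order rewriting systems it is confluent.

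I do not expect a genuine obstacle here; the one point deserving a moment's care is the bound variable when treating $\pi_i(\lambda x.Z\,x)\to\lambda x.\pi_i(Z\,x)$ as a pattern rule and verifying that it does not overlap with $\pi_i\langle Z_1,Z_2\rangle\to Z_i$ — but an abstraction and a pair can never be unified, so this is immediate. As a self-contained alternative avoiding the abstract framework, one could instead use the Tait--Martin-L\"of technique: define a parallel-reduction relation $\parto$ that closes all four rules simultaneously, check $\todist\subseteq\parto\subseteq\Rewn{\distsym}$, and prove the diamond property of $\parto$ with the help of \reflemma{substitutivity}; this is the route St\o{}vring \cite{Sto06} takes for the larger calculus $\lambda_{FP}$.
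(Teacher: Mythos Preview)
Your proposal is correct and follows exactly the paper's approach: the paper simply asserts that the distributive $\lambda$-calculus is an orthogonal higher-order rewriting system and invokes the general confluence theorem, without even spelling out the verification of left-linearity and absence of critical pairs that you carry out. Your added detail and the mention of the Tait--Martin-L\"of alternative (which the paper also notes St{\o}vring uses) are welcome but not required.
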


\paragraph{Leftmost-Outermost Normalization.} A classic property of the ordinary $\l$-calculus is the (untyped) normalization theorem for leftmost-outermost (shortened to LO) reduction. The theorem states that LO reduction $\tolo$ is \emph{normalizing}, that is, $\tolo$ reaches a normal form from $\tm$ whenever $\tm$ has a $\beta$ reduction sequence to a normal form. The definition of LO reduction $\tolo$ on ordinary $\l$-terms is given by:
 \begin{center}
 $\begin{array}{c\colspace \colspace ccccc}
 \multicolumn{2}{c}{\textsc{\LO reduction for the ordinary $\l$-calculus}}
 \\[5pt]
	\AxiomC{}
	\UnaryInfC{$(\la\var \tm) \tmtwo \tolo \tm\isub\var\tmtwo$}
	\DisplayProof
	&
	\AxiomC{$\tm \tolo \tmtwo$}
	\AxiomC{$\tm$ is neutral}
	\BinaryInfC{$\tm \tmthree \tolo \tmtwo \tmthree$}
	\DisplayProof \\\\
	
	\AxiomC{$\tm \tolo \tmtwo$}
	\UnaryInfC{$\la\var\tm \tolo \la\var\tmtwo$}
	\DisplayProof 
	&
	  \AxiomC{$\tmthree$ is neutral and normal}
	\AxiomC{$\tm \tolo \tmtwo$}
	\BinaryInfC{$\tmthree \tm \tolo \tmthree \tmtwo$}
	\DisplayProof 
\end{array}$
\end{center}
By exploiting an abstract result by van Ramsdonk, we obtain a LO normalization theorem for $\distl$ for free. Leftmost-outermost reduction $\tolo$ can indeed be defined uniformly for every orthogonal rewriting system. For the distributive $\l$-calculus we simply consider the previous rules with respect to terms in $\distl$, and add the following clauses:
   \begin{center}
 $\begin{array}{c\colspace c\colspace cccc}
 \multicolumn{3}{c}{\textsc{\LO reduction clauses for pairs and projections}}
 \\[5pt]
	\AxiomC{}
	\UnaryInfC{$\pi_i \pair{\tm_1}{\tm_2}  \tolo \tm_i$}
	\DisplayProof
	&
	\AxiomC{}
	\UnaryInfC{$\pair \tm \tmtwo \tmthree \tolo \pair {\tm\tmthree}{\tmtwo\tmthree}$}
	\DisplayProof
	&
	\AxiomC{}
	\UnaryInfC{$\pi_i(\lambda x.\tm) \tolo \lambda x.\pi_i\tm$}
	\DisplayProof
	\\\\
	\AxiomC{$\tm \tolo \tmtwo$}
	\UnaryInfC{$\pi_i\tm \tolo \pi_i\tmtwo$}
	\DisplayProof 
	&

	\AxiomC{$\tm \tolo \tmtwo$}
	\UnaryInfC{$\pair\tm \tmthree \tolo \pair\tmtwo \tmthree$}
	\DisplayProof
&	
	  \AxiomC{$\tmthree$ is normal}
	\AxiomC{$\tm \tolo \tmtwo$}
	\BinaryInfC{$\pair\tmthree \tm \tolo \pair\tmthree \tmtwo$}
	\DisplayProof 
\end{array}$
\end{center}
In \cite{DBLP:conf/tlca/Raamsdonk97}, van Ramsdonk shows that every orthogonal higher-order rewriting system that is \emph{fully extended} and \emph{left normal} has a LO normalization theorem\footnote{Precisely, on the one hand van Ramsdonk in \cite{DBLP:conf/tlca/Raamsdonk97} shows that full extendedness implies that outermost-fair strategies are normalizing. On the other hand, left-normality implies that leftmost-fair rewriting is normalizing. Then, the LO stategy is normalizing.}. These requirements, similarly to orthogonality, concern the shape of the rewriting rules---see \cite{DBLP:conf/tlca/Raamsdonk97} for exact definitions. Verifying that the distributive $\l$-calculus is fully extended and left normal is a routine check, omitted here to avoid defining formally higher-order rewriting systems. The theorem then follows.

\begin{theorem}[Leftmost-outermost normalization]
If $\tm \todist^* \tmtwo$ and $\tmtwo$ is $\todist$-normal then $\tm \tolo^* \tmtwo$.
\qed
\end{theorem}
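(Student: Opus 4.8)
The plan is to derive the statement as a direct instance of van Ramsdonk's abstract normalization theorem~\cite{DBLP:conf/tlca/Raamsdonk97}, as announced just above. Concretely one has to (1)~present $\distl$ as a higher-order rewriting system in the standard format of~\cite{AczelHO,phdklop,DBLP:conf/lics/Nipkow91}; (2)~verify that this system is orthogonal, \emph{fully extended}, and \emph{left-normal}; (3)~apply van Ramsdonk's theorem to conclude that the leftmost-outermost strategy $\tolo$ is a normalizing strategy; and (4)~use confluence to upgrade ``normalizing'' to the precise claim that $\tolo$ reaches exactly the given normal form $\tmtwo$. For step~(1) one uses the usual encoding of the $\l$-calculus with pairs: binary symbols $\mathsf{app}$, $\mathsf{pair}$ and unary $\mathsf{lam}, \pi_1, \pi_2$, with the four top-level rules written using higher-order metavariables, \eg $\mathsf{app}(\mathsf{lam}(\l x. Z(x)), Y) \to Z(Y)$ for $\beta$ and $\pi_i(\mathsf{lam}(\l x. Z(x))) \to \mathsf{lam}(\l x. \pi_i(Z(x)))$ for $\pi_\lambda$, while the rules for $\pi_i$ on a pair and for $@_\times$ carry no binders.

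For step~(2), orthogonality is already what was used for confluence: the four left-hand sides are patterns, left-linear, and pairwise non-overlapping --- the only two pairs of rules sharing a head symbol are $\beta$/$@_\times$ (head $\mathsf{app}$) and $\pi_i$/$\pi_\lambda$ (head $\pi_i$), and in both the first argument forces incompatible constructors ($\mathsf{lam}$ versus $\mathsf{pair}$), so no unifier exists, and no proper non-metavariable subterm of a left-hand side unifies with another left-hand side. Full extendedness holds because the only left-hand sides containing a binder, those of $\beta$ and $\pi_\lambda$, pass the bound variable $x$ to the metavariable occurring in its scope (the occurrence is $Z(x)$, not $Z$). Left-normality holds because in the prefix traversal of each left-hand side every function symbol precedes every metavariable occurrence: for $\beta$ one meets $\mathsf{app}, \mathsf{lam}, \l x, Z, Y$ in this order, so after the first metavariable $Z$ only the bound variable $x$ and the metavariable $Y$ remain; the other three rules are similar. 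This is exactly where dropping extensionality matters: the surjective-pairing rule $\pair{\pi_1 Z}{\pi_2 Z} \to Z$ would break left-normality, consistently with the fact that St{\o}vring's full $\lambda_{FP}$ does not enjoy LO normalization~\cite{Sto06}.

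For step~(3), van Ramsdonk's theorem now applies: full extendedness makes outermost-fair rewriting normalizing and left-normality makes leftmost-fair rewriting normalizing, so the leftmost-outermost strategy --- which is trivially both, since a redex it would skip remains the leftmost-outermost one and is contracted at the very next step --- is normalizing. Hence, if $\tm$ has a $\todist$-normal form, the (deterministic) iteration of $\tolo$ from $\tm$ terminates at some normal form $\tmthree$ with $\tm \tolo^* \tmthree$. For step~(4), since $\tolo$ is a restriction of $\todist$ we get $\tm \todist^* \tmthree$; by confluence, $\todist$-normal forms are unique, so $\tmthree = \tmtwo$, which gives $\tm \tolo^* \tmtwo$ as required.

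The difficulty is entirely bureaucratic rather than mathematical. On the one hand it requires setting up the formal apparatus of orthogonal higher-order rewriting and van Ramsdonk's fairness results~\cite{AczelHO,phdklop,DBLP:conf/lics/Nipkow91,DBLP:conf/tlca/Raamsdonk97}, which is precisely what the paper chose to avoid. On the other hand, one must check that the concretely-defined strategy $\tolo$ of this section really is the abstract leftmost-outermost strategy: in particular that the side conditions ``$\tmthree$ neutral and normal'' for applications and ``$\tmthree$ normal'' for pairs correctly express ``no redex is more outermost or more to the left'' --- a value in head position of an application, or a reducible term in the left component of a pair, would hide such a redex. Once the formalism is in place, each of the three properties above is a one-line inspection of the four top-level rules, which is why the check is routine.
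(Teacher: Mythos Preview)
Your proposal is correct and follows exactly the approach of the paper: invoke van Ramsdonk's abstract result after checking that $\distl$, cast as a higher-order rewriting system, is orthogonal, fully extended, and left-normal, and then use confluence for uniqueness of the normal form. You simply spell out the routine checks that the paper explicitly omits; the extra remark on why surjective pairing would spoil left-normality is a nice bonus.
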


\section{Simple Types Up To Distributivity}
In this section we define the simply typed distributive $\l$-calculus and prove subject reduction.

\paragraph{The type system.} The grammar of types is given by
\[
  A\ \ \grameq\ \ \tau\mid A\Rightarrow A\mid A\wedge A
\]
where $\tau$ is a given atomic type.

The relation $\equiv$ denoting type isomorphism is defined by
\[\begin{tabular}{c\colspace \colspace c\colspace c\colspace c\colspace ccc}
  \infer{A\equiv A}{}
  &
  \infer{A\equiv B}{B\equiv A}
  &
  \infer{A\equiv C}{A\equiv B & B \equiv C}
  &
  \multicolumn{2}{c}{
  \infer{A\Rightarrow B\wedge C  \equiv (A\Rightarrow B)\wedge(A\Rightarrow C)}{}
  }
  \\\\
  \multicolumn{2}{c}{
  \infer{A\Rightarrow B\equiv C \Rightarrow B}{A\equiv C}
  }
  &
  \infer{A\Rightarrow B\equiv A \Rightarrow C}{B\equiv C}
  &
  \infer{A\wedge B\equiv C \wedge B}{A\equiv C}
  &
  \infer{A\wedge B\equiv A \wedge C}{B\equiv C}
  \end{tabular}
\]
The typing rules are:
\[\begin{tabular}{c\colspace c\colspace c}
  \infer[(ax)]{\Gamma,x:A\vdash x:A}{}
  &
  \multicolumn{2}{c}{
  \infer[(\equiv)]{\Gamma\vdash \tm:B}{\Gamma\vdash \tm:A & A\equiv B}
  }
\\\\
\infer[(\Rightarrow_i)]{\Gamma\vdash\lambda x.\tm:A\Rightarrow B}{\Gamma,x:A\vdash \tm:B}
&
  \multicolumn{2}{c}{
  \infer[(\Rightarrow_e)]{\Gamma\vdash \tm\tmtwo:B}{\Gamma\vdash \tm:A\Rightarrow B & \Gamma\vdash \tmtwo:B}
  }
\\\\
  \infer[(\wedge_i)]{\Gamma\vdash\pair{\tm}{\tmtwo}:A\wedge B}{\Gamma\vdash \tm:A & \Gamma\vdash \tmtwo:B}
  &
  \infer[(\wedge_{e_1})]{\Gamma\vdash\pi_1\tm:A}{\Gamma\vdash \tm:A\wedge B}
  &
  \infer[(\wedge_{e_2})]{\Gamma\vdash\pi_2\tm:B}{\Gamma\vdash \tm:A\wedge B}
  \end{tabular}
\]
Note rule $\equiv$: it states that if $\tm$ is typable with $A$ then it is also typable with $B$ for any type $B \equiv A$. It is the key rule for having subject reduction for the distributive $\l$-calculus.

\paragraph{Subject reduction.} The proof of subject reduction is built in a standard way, from a generation and a substitution lemma, plus a straightforward lemma on the shape of isomorphic types.
\begin{lemma}
  [Generation]\label{lem:generation}
  Let $\Gamma\vdash \tm:A$. Then,
  \begin{enumerate}
  \item If $\tm=x$, then $\Gamma=\Gamma',x:B$ and $B\equiv A$.  
  \item If $\tm=\lambda x.s$, then $\Gamma,x:B\vdash s:C$ and $B\Rightarrow
    C\equiv A$.
  \item If $\tm=\pair{\tmtwo_1}{\tmtwo_2}$, then $\Gamma\vdash \tmtwo_i:B_i$, for $i=1,2$, and $B_1\wedge B_2\equiv A$.
  \item If $\tm=\tmtwo\tmthree$, then $\Gamma\vdash \tmtwo:B\Rightarrow A$, $\Gamma\vdash \tmthree:A$.
  \item If $\tm=\pi_i\tmtwo$, then $\Gamma\vdash \tmtwo:B_1\wedge B_2$ and $B_i = A$.
  \end{enumerate}
\end{lemma}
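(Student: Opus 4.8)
The plan is to prove all five points simultaneously, by induction on the derivation of $\Gamma \vdash \tm : A$, casing on the last rule. Since every typing rule but $(\equiv)$ is syntax-directed, the shape of $\tm$ determines which non-$(\equiv)$ rule may conclude the derivation; so for each point there are exactly two situations to treat: the matching syntax-directed rule, or the rule $(\equiv)$.

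When the last rule is the syntax-directed one fitting $\tm$ — $(ax)$ for $\tm = x$, $(\Rightarrow_i)$ for $\tm = \lambda x.s$, $(\wedge_i)$ for $\tm = \pair{\tmtwo_1}{\tmtwo_2}$, $(\Rightarrow_e)$ for $\tm = \tmtwo\tmthree$, and $(\wedge_{e_i})$ for $\tm = \pi_i\tmtwo$ — the corresponding point is read off directly from the premises of that rule, with the isomorphisms required in points 1--3 being instances of reflexivity of $\equiv$, and the equalities required in points 4--5 holding literally.

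When the last rule is $(\equiv)$, we have $\Gamma \vdash \tm : A$ inferred from $\Gamma \vdash \tm : A'$ with $A' \equiv A$. I apply the induction hypothesis to $\Gamma \vdash \tm : A'$, obtaining the statement with $A'$ in place of $A$, and then transport it along $A' \equiv A$. For points 1--3 transitivity of $\equiv$ suffices: for instance in point 2 the \ih gives $\Gamma, x:B \vdash s : C$ with $B \Rightarrow C \equiv A'$, and $B \Rightarrow C \equiv A' \equiv A$ closes the case; the variable and pair cases are identical, composing $B \equiv A'$, resp. $B_1 \wedge B_2 \equiv A'$, with $A' \equiv A$. For points 4--5 the statement pins down the \emph{literal} arrow, resp. conjunction, type of an immediate subterm, so an extra step is needed: the \ih types that subterm with $A'$ sitting where $A$ is wanted, and one pushes $A' \equiv A$ inside that type using the congruence clauses of $\equiv$ for $\Rightarrow$ and $\wedge$, then re-types the subterm with a single use of $(\equiv)$. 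Concretely, in point 5 the \ih yields $\Gamma \vdash \tmtwo : B_1 \wedge B_2$ with $B_i = A'$; since $B_i \equiv A$, the congruence rule for $\wedge$ gives a conjunction isomorphic to $B_1 \wedge B_2$ whose $i$-th component is exactly $A$, on which $(\equiv)$ applies; in point 4 one similarly replaces the relevant occurrence of $A'$ by $A$ in the type of the function part (and, if needed, of the argument) using the congruence rule for $\Rightarrow$ followed by $(\equiv)$.

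I do not expect any genuine obstacle. The only step demanding slight care is precisely this transport in the application and projection cases: because there the statement asks for an \emph{actual} arrow, resp. conjunction, type of a subterm rather than one up to $\equiv$, one cannot just chain isomorphisms of $A$ but must move the isomorphism $A' \equiv A$ \emph{into} the subterm's type via the congruence rules of $\equiv$ and re-apply $(\equiv)$. Everything else is routine bookkeeping over the syntax-directed rules.
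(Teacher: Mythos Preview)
Your proposal is correct and follows essentially the same approach as the paper: induction on the typing derivation, with the observation that the only non-trivial case is when the last rule is $(\equiv)$, and that for the elimination constructs (points 4 and 5) one pushes the isomorphism $A'\equiv A$ upward into the premise's type via the congruence clauses of $\equiv$ and a single use of $(\equiv)$. The paper phrases this last move as ``permuting the $\equiv$ rule upwards'', which is exactly what you spell out.
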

\begin{proof}
  Formally, the proof is by induction on $\Gamma\vdash \tm:A$, but we rather give an informal explanation. If $\tm$ is a value ($\var$, $\la\var\tmtwo$, or $\pair{\tmtwo_1}{\tmtwo_2}$) then the last rule may be either the corresponding introduction rule or $\equiv$, and the statement follows. If $\tm$ is not a value there are two similar cases. If $\tm=\tmtwo\tmthree$ what said for values still holds, but we can say something more. Note indeed that if $A\equiv C$ and $\Gamma\vdash \tmtwo:B\Rightarrow C$ then since C is a sub-formula of $B\Rightarrow C$ we can permute the $\equiv$ rule upwards and obtain $\Gamma\vdash \tmtwo:B\Rightarrow A$. Similarly if $\tm=\pi_i\tmtwo$, which is also an  elimination rule.
\end{proof}

\begin{lemma}
  [Substitution]\label{lem:substitution}
  If $\Gamma,x:A\vdash \tm:B$ and $\Gamma\vdash \tmtwo:A$, then $\Gamma\vdash \tm\isub x\tmtwo:B$.
\end{lemma}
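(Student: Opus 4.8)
The plan is to prove the statement by induction on the derivation of $\Gamma,\var:A\vdash \tm:B$, reasoning on the last typing rule. Working on the derivation rather than on $\tm$ is what lets us dispatch the rule $(\equiv)$ uniformly and without effort: if the derivation ends with $(\equiv)$, inferring $\Gamma,\var:A\vdash\tm:B$ from $\Gamma,\var:A\vdash\tm:B'$ and $B'\equiv B$, then the \ih\ gives $\Gamma\vdash\tm\isub\var\tmtwo:B'$, and a further application of $(\equiv)$ yields $\Gamma\vdash\tm\isub\var\tmtwo:B$. (Alternatively, one could induct on $\tm$ and use the Generation lemma to look past the $\equiv$ steps; the cost is the same.)

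The base case is $(ax)$. Either the typed variable is $\var$ itself, so $\tm=\var$, $B=A$, and $\var\isub\var\tmtwo=\tmtwo$: the goal $\Gamma\vdash\tmtwo:A$ is then exactly the second hypothesis. Or the typed variable is some $\vartwo\neq\var$, so $\tm=\vartwo$, $\tm\isub\var\tmtwo=\vartwo$, the binding $\vartwo:B$ still occurs in $\Gamma$, and $(ax)$ re-derives $\Gamma\vdash\vartwo:B$ at once.

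The inductive cases for the introduction and elimination rules of $\Rightarrow$ and $\wedge$ are all of the same routine shape: one observes that substitution commutes with the term constructor---e.g. $(\la\vartwo\tmthree)\isub\var\tmtwo=\la\vartwo(\tmthree\isub\var\tmtwo)$, $(\tmthree\tmfour)\isub\var\tmtwo=(\tmthree\isub\var\tmtwo)(\tmfour\isub\var\tmtwo)$, $(\pi_i\tmthree)\isub\var\tmtwo=\pi_i(\tmthree\isub\var\tmtwo)$, and likewise for pairs---applies the \ih\ to each premise, and rebuilds the derivation with the same rule. The single point deserving a line of care is the abstraction case $\tm=\la\vartwo\tmthree$, with $B=C\Rightarrow D$ and premise $\Gamma,\var:A,\vartwo:C\vdash\tmthree:D$: up to $\alpha$-renaming we take $\vartwo\notin\fv\tmtwo\cup\set\var$, so that substituting $\tmtwo$ for $\var$ under the binder $\vartwo$ is capture-free, and before invoking the \ih\ we weaken the second hypothesis to $\Gamma,\vartwo:C\vdash\tmtwo:A$. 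This uses the standard weakening lemma---if $\Gamma\vdash\tm:A$ and $\var$ is fresh for $\Gamma$ then $\Gamma,\var:B\vdash\tm:A$---which is itself a straightforward induction on the typing derivation (the $(\equiv)$ case again going through for free).

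In short, there is no real obstacle: this is the textbook substitution argument for the simply typed $\l$-calculus with pairs, and the isomorphism rule $(\equiv)$ merely adds one trivial case, both here and in the auxiliary weakening lemma. The remaining bookkeeping is deferred to the Appendix.
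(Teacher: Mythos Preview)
Your proposal is correct and follows essentially the same approach as the paper: induction on the typing derivation, with the $(\equiv)$ case dispatched by the \ih\ plus a final use of $(\equiv)$, and the remaining cases handled routinely. If anything you are slightly more careful than the paper, which in the $(\Rightarrow_i)$ case silently applies the \ih\ with the enlarged context $\Gamma,\vartwo:C$ without mentioning the weakening of $\Gamma\vdash\tmtwo:A$ that this requires.
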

\begin{proof}
  Easy induction on the derivation of $\Gamma,x:A\vdash\tm:B$. Details in the Appendix.
\end{proof}

\begin{lemma}[Equivalence of types]\label{lem:equiv}
  ~
  \begin{enumerate}
  \item If $A\wedge B\equiv C\wedge D$ then $A\equiv C$ and $B\equiv D$.
  \item If $A\Rightarrow B\equiv C\Rightarrow D$ then $A\equiv C$ and $B\equiv
    C$.
  \item If $A\wedge B\equiv C\Rightarrow D$ then $D\equiv D_1\wedge D_2$,
    $A\equiv C\Rightarrow D_1$ and $B\equiv C\Rightarrow D_2$.  
  \end{enumerate}
\end{lemma}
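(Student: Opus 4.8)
The plan is to reduce the whole lemma to an analysis of the rewriting relation obtained by orienting distributivity from left to right. Write $U\rightsquigarrow V$ for the closure under all type constructors of $A\Rightarrow(B\wedge C)\ \rightsquigarrow\ (A\Rightarrow B)\wedge(A\Rightarrow C)$, and $\rightsquigarrow^*$ for its reflexive-transitive closure. As a first-order rewriting system $\rightsquigarrow$ is left-linear and has \emph{no critical pairs}: the only non-variable positions of the left-hand side are the $\Rightarrow$ at the root and the $\wedge$ immediately below it, which cannot overlap non-trivially. Hence $\rightsquigarrow$ is orthogonal, and therefore confluent, by the same reasoning used for the confluence of the distributive $\l$-calculus. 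Moreover, reading off the inductive definition of $\equiv$, one checks directly that $\equiv$ is exactly the equivalence-and-congruence generated by $\rightsquigarrow$. By the Church--Rosser property, $A\equiv B$ holds if and only if $A$ and $B$ have a common $\rightsquigarrow$-reduct, and from here on I argue with common reducts.

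The second ingredient is a description of the $\rightsquigarrow$-reducts of a type according to its head constructor. For conjunctions it is immediate by induction on the length of the reduction, since no rule has $\wedge$ at its head: every reduct of $A\wedge B$ has the form $A'\wedge B'$ with $A\rightsquigarrow^* A'$ and $B\rightsquigarrow^* B'$. For arrows a slightly more careful induction on the reduction length is needed, since a top-level distribution turns the arrow into a conjunction: I claim every reduct $E$ of $U\Rightarrow V$ can be written as $\mathcal{C}\langle U_1\Rightarrow V_1,\dots,U_k\Rightarrow V_k\rangle$, where $\mathcal{C}$ is the maximal context of $\wedge$-nodes at the top of $E$ (so the $U_i\Rightarrow V_i$ are precisely the conjuncts of $E$, and in particular every conjunct of $E$ is arrow-headed), each $U_i$ satisfies $U\rightsquigarrow^* U_i$, and $\mathcal{C}\langle V_1,\dots,V_k\rangle$ is a reduct of $V$. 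In the inductive step the new step is either inside some $U_i$, or inside some $V_i$, or a top-level distribution at a leaf $U_i\Rightarrow(V_i'\wedge V_i'')$ — which merely grows $\mathcal{C}$ by one $\wedge$-node; all three cases preserve the invariant. This structural lemma is the only slightly delicate point.

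The three statements then follow by intersecting the two descriptions of a common reduct $E$. For point~1, $A\wedge B\equiv C\wedge D$ gives $E=A'\wedge B'=C'\wedge D'$ with $A\rightsquigarrow^* A'$, $B\rightsquigarrow^* B'$, $C\rightsquigarrow^* C'$, $D\rightsquigarrow^* D'$; comparing the two top decompositions forces $A'=C'$ and $B'=D'$, whence $A\equiv C$ and $B\equiv D$. For point~2, a common reduct $E$ of $A\Rightarrow B$ and $C\Rightarrow D$ is written from both sides over the \emph{same} top $\wedge$-context $\mathcal{C}$ with the same number $k\ge 1$ of conjuncts, say $E=\mathcal{C}\langle P_1\Rightarrow Q_1,\dots\rangle=\mathcal{C}\langle R_1\Rightarrow S_1,\dots\rangle$ with $A\rightsquigarrow^* P_i$, $C\rightsquigarrow^* R_i$, and $\mathcal{C}\langle Q_1,\dots,Q_k\rangle$, $\mathcal{C}\langle S_1,\dots,S_k\rangle$ reducts of $B$ and $D$ respectively; comparing $i$-th conjuncts gives $P_i=R_i$ and $Q_i=S_i$, so $A\equiv P_1=R_1\equiv C$, and since $\mathcal{C}\langle Q_1,\dots,Q_k\rangle=\mathcal{C}\langle S_1,\dots,S_k\rangle$ is a common reduct of $B$ and $D$ we get $B\equiv D$. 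For point~3, $A\wedge B\equiv C\Rightarrow D$ gives a common reduct $E$ that is a conjunction (by the $\wedge$-side description $E=A'\wedge B'$), so the $\Rightarrow$-side context splits at its root, $\mathcal{C}=\mathcal{C}_1\wedge\mathcal{C}_2$, with $A'=\mathcal{C}_1\langle R_1\Rightarrow S_1,\dots,R_\ell\Rightarrow S_\ell\rangle$ and $B'=\mathcal{C}_2\langle R_{\ell+1}\Rightarrow S_{\ell+1},\dots,R_k\Rightarrow S_k\rangle$; setting $D_1:=\mathcal{C}_1\langle S_1,\dots,S_\ell\rangle$ and $D_2:=\mathcal{C}_2\langle S_{\ell+1},\dots,S_k\rangle$, pushing $C\Rightarrow(-)$ through the $\wedge$-trees $\mathcal{C}_1,\mathcal{C}_2$ and then reducing each copy of $C$ via $C\rightsquigarrow^* R_i$ yields $C\Rightarrow D_1\rightsquigarrow^* A'$ and $C\Rightarrow D_2\rightsquigarrow^* B'$, hence $A\equiv C\Rightarrow D_1$ and $B\equiv C\Rightarrow D_2$; finally $D_1\wedge D_2=\mathcal{C}\langle S_1,\dots,S_k\rangle$ is a reduct of $D$, so $D\equiv D_1\wedge D_2$.

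The main obstacle is the arrow case of the structural lemma — tracking how repeated top-level distributions assemble the conjunctive context and verifying that every conjunct of a reduct is arrow-headed; the rest is standard rewriting theory (orthogonality, Church--Rosser) plus routine syntactic matching. One could instead argue via unique $\rightsquigarrow$-normal forms, but that route requires proving termination of the distributivity rewriting, the classical nuisance with distributive laws, which the common-reduct argument avoids entirely.
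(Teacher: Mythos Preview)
Your proof is correct and takes a genuinely different route from the paper, whose entire proof is the single line ``By induction on the definition of $\equiv$.'' You instead orient distributivity as a rewrite rule, observe it is orthogonal hence confluent, and reduce $\equiv$ to joinability; the substantive work is then your structural lemma describing every reduct of an arrow type as a $\wedge$-tree of arrows whose domains are all reducts of the original domain and whose codomains, reassembled under the same $\wedge$-tree, form a reduct of the original codomain---and you have stated and verified this correctly. What your approach buys is transparent well-foundedness: in a direct induction on $\equiv$-derivations the transitivity case is delicate, because the middle type may flip head constructor (e.g.\ $(X\Rightarrow Y)\wedge(X\Rightarrow Z)\equiv X\Rightarrow(Y\wedge Z)\equiv\cdots$), forcing one to invoke clause~3 on both sub-derivations and then re-apply clause~1 to compare the two resulting decompositions of the codomain---an appeal that is \emph{not} to a sub-derivation of the original. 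The paper's one-liner glosses over this; your confluence argument replaces the problematic transitivity case with a single common-reduct comparison, at the modest cost of setting up the rewriting machinery up front.
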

\begin{proof}
  By induction on the definition of $\equiv$.
\end{proof}

\begin{theorem}
  [Subject reduction]\label{thm:SR}
  If $\Gamma\vdash \tm:A$ and $\tm\todist \tmtwo$, then $\Gamma\vdash \tmtwo:A$.
\end{theorem}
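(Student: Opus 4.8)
The plan is to prove the statement by induction on the context $\ctx$ witnessing the step, \emph{i.e.} on the derivation of $\tm\todist\tmtwo$: if $\tm = \ctxp\tmthree$, $\tmtwo = \ctxp{\tmthree'}$ and $\tmthree\rootRew{a}\tmthree'$ with $a \in \set{\beta,\pi_1,\pi_2,@_\times,\pi_\lambda}$, then $\ctx$ is either the empty context $\ctxhole$ (the four base cases, one per top-level rule) or one of the seven one-step contexts (the inductive cases). The inductive cases are routine and uniform: from $\Gamma \vdash \ctxp\tmthree : A$, Lemma~\ref{lem:generation} (Generation) exposes the types of the immediate subterms of $\ctxp\tmthree$, one of which contains the redex; the induction hypothesis replaces it by its type-preserving reduct; and the same introduction/elimination rule, possibly followed by an instance of $(\equiv)$, reassembles a derivation of $\Gamma \vdash \ctxp{\tmthree'} : A$. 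The only care needed here is already packaged inside Generation, namely that an $(\equiv)$ step sitting below an elimination rule can be permuted above it, so that the argument of an application (resp. of a projection) really carries the expected arrow (resp. conjunctive) type.

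For the base case $\beta$: from $\Gamma \vdash (\la\var\tm)\tmtwo : A$, Generation gives $\Gamma \vdash \la\var\tm : B \Rightarrow A$ and $\Gamma \vdash \tmtwo : B$, and a second use of Generation on the abstraction gives $\Gamma,\var:C \vdash \tm : D$ with $C\Rightarrow D \equiv B\Rightarrow A$. By Lemma~\ref{lem:equiv}(2) domains and codomains are respectively isomorphic, so two instances of $(\equiv)$ turn this into $\Gamma,\var:B \vdash \tm : A$, and Lemma~\ref{lem:substitution} (Substitution) concludes $\Gamma \vdash \tm\isub\var\tmtwo : A$. For $\pi_i$: from $\Gamma \vdash \pi_i\pair{\tm_1}{\tm_2} : A$, Generation gives $\Gamma \vdash \pair{\tm_1}{\tm_2} : B_1\wedge B_2$ with $B_i = A$, and Generation on the pair gives $\Gamma \vdash \tm_j : C_j$ with $C_1\wedge C_2 \equiv B_1\wedge B_2$; by Lemma~\ref{lem:equiv}(1), $C_i \equiv B_i = A$, so $(\equiv)$ yields $\Gamma \vdash \tm_i : A$.

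The two distributive rules are where the isomorphism does the real work, and both hinge on Lemma~\ref{lem:equiv}(3), the clause relating a conjunction to an arrow. For $@_\times$: from $\Gamma \vdash \pair\tm\tmtwo \tmthree : A$, Generation gives $\Gamma \vdash \pair\tm\tmtwo : B\Rightarrow A$ and $\Gamma \vdash \tmthree : B$, and Generation on the pair gives $\Gamma \vdash \tm : C_1$, $\Gamma \vdash \tmtwo : C_2$ with $C_1 \wedge C_2 \equiv B\Rightarrow A$. Lemma~\ref{lem:equiv}(3) produces $A_1, A_2$ with $A \equiv A_1\wedge A_2$, $C_1 \equiv B\Rightarrow A_1$ and $C_2 \equiv B\Rightarrow A_2$; applying $(\equiv)$ gives $\Gamma \vdash \tm : B\Rightarrow A_1$ and $\Gamma \vdash \tmtwo : B\Rightarrow A_2$, hence $\Gamma \vdash \tm\tmthree : A_1$, $\Gamma \vdash \tmtwo\tmthree : A_2$, hence $\Gamma \vdash \pair{\tm\tmthree}{\tmtwo\tmthree} : A_1\wedge A_2$, and a final $(\equiv)$ recovers type $A$. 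For $\pi_\lambda$: from $\Gamma \vdash \pi_i(\la\var\tm) : A$, Generation gives $\Gamma \vdash \la\var\tm : B_1\wedge B_2$ with $B_i = A$, and Generation on the abstraction gives $\Gamma,\var:C \vdash \tm : D$ with $C\Rightarrow D \equiv B_1\wedge B_2$. Lemma~\ref{lem:equiv}(3) then gives $D \equiv D_1\wedge D_2$ and $B_i \equiv C\Rightarrow D_i$; so $\Gamma,\var:C \vdash \tm : D_1\wedge D_2$ by $(\equiv)$, whence $\Gamma,\var:C \vdash \pi_i\tm : D_i$, hence $\Gamma \vdash \la\var\pi_i\tm : C\Rightarrow D_i$, and since $C\Rightarrow D_i \equiv B_i = A$, one last $(\equiv)$ gives $\Gamma \vdash \la\var\pi_i\tm : A$.

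I expect the main obstacle to be purely organizational: correctly threading the $(\equiv)$ rule through the argument. This requires exactly the two ingredients above — that $\equiv$ decomposes componentwise on both connectives (Lemma~\ref{lem:equiv}) and that $(\equiv)$ commutes upward past $(\Rightarrow_e)$, $(\wedge_{e_1})$ and $(\wedge_{e_2})$ (built into the Generation Lemma). With those in hand each case is short, and the distributive cases are precisely the situation that Lemma~\ref{lem:equiv}(3) was tailored for, so there is no genuine difficulty, only careful bookkeeping.
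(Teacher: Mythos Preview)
Your argument is essentially the paper's own proof: same induction, same use of Generation, same appeal to Lemma~\ref{lem:equiv} in each of the four top-level cases, and the same treatment of the inductive (contextual) cases.

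One small point of bookkeeping in your $\beta$ case deserves attention. From $\Gamma,\var:C\vdash\tm:D$ with $C\equiv B$ and $D\equiv A$ you write that ``two instances of $(\equiv)$ turn this into $\Gamma,\var:B\vdash\tm:A$''. But the rule $(\equiv)$ as stated only changes the \emph{subject's} type, not the type assigned to a variable in the context; turning $\var:C$ into $\var:B$ is not an instance of $(\equiv)$ and would require a separate admissibility argument. The paper sidesteps this neatly: instead of touching the context, it applies $(\equiv)$ to the \emph{argument}, obtaining $\Gamma\vdash\tmtwo:C$ from $\Gamma\vdash\tmtwo:B$, then invokes Substitution with $\Gamma,\var:C\vdash\tm:D$ to get $\Gamma\vdash\tm\isub\var\tmtwo:D$, and finally uses $(\equiv)$ once more to reach type $A$. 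With that adjustment your proof and the paper's coincide.
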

\begin{proof}
  By induction on $\tm\todist \tmtwo$ using the generation lemma (Lemma~\ref{lem:generation}). We first deal with the cases of the rules applied at top level:
  \begin{itemize}
  \item \emph{$\beta$-rule}: $(\lambda x.\tm)\tmtwo \rtob \tm\isub x\tmtwo$.
    By generation, $\Gamma\vdash \lambda x.\tm:B\Rightarrow A$,
    $\Gamma\vdash \tmtwo:B$.
    Again by generation, $\Gamma,x:C\vdash \tm:D$, with
    $C\Rightarrow D\equiv B\Rightarrow A$, so by Lemma~\ref{lem:equiv}, $C\equiv
    B$ and $D\equiv A$.
    Then, by rule $(\equiv)$ we have $\Gamma\vdash \tmtwo:C$, and so, by the substitution lemma (Lemma~\ref{lem:substitution}) we have $\Gamma\vdash \tm\isub x\tmtwo:D$, therefore, by
    rule $(\equiv)$, $\Gamma\vdash \tm\isub x\tmtwo:A$.
  
  \item \emph{Projection}: $\pi_i\langle \tm_1,\tm_2 \rangle\rtopri \tm_i$.
    By generation, $\Gamma\vdash\langle \tm_1,\tm_2
    \rangle:B_1\wedge B_2$ with $B_i= A$.
    By generation again, $\Gamma\vdash \tm_i:C_i$ with $C_1\wedge C_2\equiv
    B_1\wedge B_2$. Therefore, by rule $(\equiv)$, $\Gamma\vdash \tm_i:A$.
    
  \item \emph{Pair-application}: $\pair\tm\tmtwo \tmthree \rtopaap \pair{\tm\tmthree}{\tmtwo\tmthree}$.
    By generation, $\Gamma\vdash\pair\tm\tmtwo:B\Rightarrow
    A$ and $\Gamma\vdash \tmthree:B$.
    By generation again, $\Gamma\vdash \tm:C$ and $\Gamma\vdash
    \tmtwo:D$ with $C\wedge D\equiv B\Rightarrow A$.
    By Lemma~\ref{lem:equiv}, $A\equiv A_1\wedge A_2$, $C\equiv B\Rightarrow
    A_1$ and $D\equiv B\Rightarrow A_2$. Then,
   {\small \[
      \infer[(\equiv)]{\Gamma\vdash\pair{\tm\tmthree}{\tmtwo\tmthree}:A}
      {
        \infer[(\wedge_i)]{\Gamma\vdash\pair{\tm\tmthree}{\tmtwo\tmthree}:A_1\wedge A_2}
        {
          \infer[(\Rightarrow_e)]{\Gamma\vdash \tm\tmthree:A_1}{\infer[(\equiv)]{\Gamma\vdash \tm:B\Rightarrow
              A_1}{\Gamma\vdash \tm:C} & \Gamma\vdash \tmthree:B}
          &
          \infer[(\Rightarrow_e)]{\Gamma\vdash \tmtwo\tmthree:A_2}{\infer[(\equiv)]{\Gamma\vdash \tmtwo:B\Rightarrow
              A_2}{\Gamma\vdash \tmtwo:D} & \Gamma\vdash \tmthree:B}
        }
      }
    \]}
  
  \item \emph{Projection-abstraction}: $\pi_i(\lambda x.\tm) \rtoprab \lambda x.\pi_i\tm$.
    By generation, $\Gamma\vdash\lambda x.\tm:B_1\wedge B_2$ with
    $B_i = A$.
    By generation again, $\Gamma,x:C\vdash \tm:D$, with
    $C\Rightarrow D\equiv B_1\wedge B_2$. Then, by Lemma~\ref{lem:equiv},
    $D\equiv D_1\wedge D_2$, $B_1\equiv C\Rightarrow D_1$, and $B_2\equiv
    C\Rightarrow D_2$. Then, $A= C\Rightarrow D_i$, and so,
    \[      
        \infer[(\Rightarrow_i)]{\Gamma\vdash\lambda x.\pi_i\tm:C\Rightarrow D_i}
        {
          \infer[(\wedge_{e_i})]{\Gamma,x:C\vdash \pi_i\tm:D_i}
          {
            \infer[(\equiv)]{\Gamma,x:C\vdash \tm:D_1\wedge D_2}
            {\Gamma,x:C\vdash \tm:D}
          }
        }      
    \]  
    \end{itemize}
  The inductive cases are all straightforward. We give one of them, the others are along the same lines. Let $\lambda x.\tm\todist\lambda x.\tmtwo$ because $\tm\todist \tmtwo$. 
    By generation, $\Gamma,x:B\vdash \tm:C$, with $B\Rightarrow
    C\equiv A$.
    By \ih, $\Gamma,x:B\vdash \tmtwo:C$, so, by rules
    $(\Rightarrow_i)$ and $(\equiv)$, $\Gamma\vdash\lambda x.\tmtwo:A$.
\end{proof}

\section{Strong normalisation}
Here we prove strong normalization using Tait's reducibility technique. The key point shall be proving that the interpretation of types is stable by distributivity.

\begin{definition}[Basic definitions and notations] 
\hfill
\begin{itemize}
 \item \emph{SN terms}: we write \SN\ for the set of strongly normalising terms.

 \item \emph{One-step reducts}: the set $\set{\tmtwo\mid \tm\todist \tmtwo}$ of all the one-step reducts of a term $\tm$ is noted $\Red \tm$. 

 \item \emph{Evaluation length}: $\eval \tm$ is the length of the longest path starting from $\tm$ to arrive to
a normal form 
\item \emph{Size}: $\size{\tm}$ 
  is the size of the term
  $\tm$
  defined in the usual way.
\end{itemize}
\end{definition}

\paragraph{The interpretation of types.} The starting point of the reducibility technique is the definition of the interpretation of types, which is the standard one.
\begin{definition}
  [Interpretation of types]
\[  \begin{array}{r@{\hspace{.2cm}} c@{\hspace{.2cm}} l}
    \interp\tau &\defeq&\SN\\
    \interp{A\Rightarrow B}&\defeq&\{\tm\mid\forall \tmtwo\in\interp A, \tm\tmtwo\in\interp B\}\\
    \interp{A\wedge B}&\defeq&\{\tm\mid\pi_1\tm\in\interp A\textrm{ and }\pi_2\tm\in\interp B\}    
  \end{array}\]
\end{definition}

\paragraph{The reducibility properties.} The next step is to prove the standard three properties of reducibility. The proof is standard, that is, the distributive rules do not play a role here.

\begin{lemma}[Properties of the interpretation]
  For any type $A$ the following properties of its interpretation are valid.
  \begin{description}
  \item[CR1] $\interp A\subseteq\SN$.
  \item[CR2] If $\tm\in\interp A$ and $\tm\todist \tmtwo$, then $\tmtwo\in\interp A$.
  \item[CR3] If $\tm$ is neutral and $\Red \tm\subseteq\interp A$, then $\tm\in\interp A$.
  \end{description}
\end{lemma}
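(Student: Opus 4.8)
The plan is to prove \textbf{CR1}, \textbf{CR2}, and \textbf{CR3} simultaneously by induction on the structure of the type $A$, since the three are mutually dependent (CR1 for $B\Rightarrow C$ uses CR3 for $B$ and CR1 for $C$; CR3 for $B\Rightarrow C$ uses CR1 and CR2 for $B$ together with CR3 for $C$; and similarly for $\wedge$). The only facts about $\todist$ I need are the basic rewriting ones — a one-step reduct of an \SN{} term is \SN, a reduction inside a subterm lifts to a reduction of the enclosing term, and on \SN{} terms $\eval{\cdot}$ strictly decreases along steps — plus the key syntactic observation that a neutral term is neither an abstraction nor a pair, so putting a neutral term in the function position of an application $\tm\tmtwo$ or under a projection $\pi_i\tm$ never creates a redex at the root. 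This is exactly the place where the distributive rules must be inspected, and the check is immediate: $@_\times$ fires only on a pair in function position and $\pi_\lambda$ only on an abstraction under a projection, so a neutral head blocks both of them just as it blocks $\beta$ and $\pi_i$. For the base case $A=\tau$, where $\interp\tau=\SN$: CR1 is the identity inclusion; CR2 is "a reduct of an \SN{} term is \SN"; and for CR3, if $\Red\tm\subseteq\SN$ then every nonempty reduction out of $\tm$ factors through an element of $\Red\tm$ and is hence finite, so $\tm\in\SN$ (neutrality is not even used here).

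For $A=B\Rightarrow C$: for CR1, take a variable $x$, which is neutral with $\Red x=\emptyset\subseteq\interp B$, so $x\in\interp B$ by CR3 on $B$ (\ih); then $\tm x\in\interp C$ and $\tm x\in\SN$ by CR1 on $C$ (\ih), whence $\tm\in\SN$ since reductions of $\tm$ lift to $\tm x$. For CR2, if $\tm\todist\tm'$ and $\tmtwo\in\interp B$ then $\tm\tmtwo\todist\tm'\tmtwo$ with $\tm\tmtwo\in\interp C$, so $\tm'\tmtwo\in\interp C$ by CR2 on $C$ (\ih); hence $\tm'\in\interp{B\Rightarrow C}$. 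The heart of the argument is CR3: let $\tm$ be neutral with $\Red\tm\subseteq\interp{B\Rightarrow C}$ and let $\tmtwo\in\interp B$; by CR1 on $B$ (\ih) we have $\tmtwo\in\SN$, so we argue by induction on $\eval\tmtwo$ that $\tm\tmtwo\in\interp C$. The term $\tm\tmtwo$ is neutral, so by CR3 on $C$ (\ih) it suffices that $\Red{\tm\tmtwo}\subseteq\interp C$. Since $\tm$ is neutral there is no root redex, so every reduct of $\tm\tmtwo$ is either $\tm'\tmtwo$ with $\tm\todist\tm'$ — and then $\tm'\in\Red\tm\subseteq\interp{B\Rightarrow C}$ gives $\tm'\tmtwo\in\interp C$ — or $\tm\tmtwo'$ with $\tmtwo\todist\tmtwo'$ — and then $\tmtwo'\in\interp B$ by CR2 on $B$ (\ih) and $\eval{\tmtwo'}<\eval\tmtwo$, so the inner induction hypothesis gives $\tm\tmtwo'\in\interp C$. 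Hence $\tm\tmtwo\in\interp C$, and $\tm\in\interp{B\Rightarrow C}$.

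For $A=B\wedge C$ the reasoning is analogous but lighter, with no nested induction. CR1: from $\tm\in\interp{B\wedge C}$ we get $\pi_1\tm\in\interp B\subseteq\SN$ (CR1 on $B$, \ih), hence $\tm\in\SN$. CR2: if $\tm\todist\tm'$ then $\pi_i\tm\todist\pi_i\tm'$, so CR2 on $B$ and $C$ (\ih) carries $\pi_1\tm\in\interp B$ and $\pi_2\tm\in\interp C$ over to $\pi_1\tm'$ and $\pi_2\tm'$; thus $\tm'\in\interp{B\wedge C}$. CR3: if $\tm$ is neutral with $\Red\tm\subseteq\interp{B\wedge C}$, then $\pi_1\tm$ is neutral and — again because the neutral $\tm$ is neither a pair nor an abstraction — its only reducts are $\pi_1\tm'$ with $\tm\todist\tm'$, which lie in $\interp B$ by hypothesis; so $\Red{\pi_1\tm}\subseteq\interp B$ and $\pi_1\tm\in\interp B$ by CR3 on $B$ (\ih). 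Symmetrically $\pi_2\tm\in\interp C$, so $\tm\in\interp{B\wedge C}$. The only delicate point of the whole proof is the CR3 case for arrows: one must set up the nested induction on $\eval\tmtwo$ correctly and verify that the neutrality of $\tm$ genuinely excludes every root redex of $\tm\tmtwo$, now including the distributive one $@_\times$; everything else is the textbook reducibility bookkeeping.
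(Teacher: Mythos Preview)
Your proof is correct and follows essentially the same reducibility argument as the paper. Your packaging as a simultaneous induction on $A$ (making the mutual dependencies explicit) and your use of $\eval\tmtwo$ for the inner induction in the CR3 arrow case are in fact small improvements over the paper's presentation, which runs three separate inductions and writes $\size\tmtwo$ for the inner measure---the latter is not actually decreasing under $\todist$, so your choice of $\eval\tmtwo$ is the right one.
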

\begin{proof}
  ~
  \begin{description}
  \item[CR1] By induction on $A$. Cases:
    \begin{itemize}
    \item $\interp\tau=\SN$.
    \item Let $\tm\in\interp{A\Rightarrow B}$. Then, for all $\tmtwo\in\interp A$, we
      have $\tm\tmtwo\in\interp B$. By \ih, $\interp
      B\subseteq\SN$, so $\tm\tmtwo\in\SN$, and hence, $\tm\in\SN$.
    \item Let $\tm\in\interp{A\wedge B}$. Then, in particular, $\pi_1\tm\in\interp A$. By \ih, $\interp A\subseteq\SN$, so $\pi_1\tm\in\SN$, and
      hence, $\tm\in\SN$.    
    \end{itemize}
  \item[CR2] By induction on $A$. Cases:
    \begin{itemize}
    \item Let $\tm\in\interp\tau=\SN$. Then if $\tm\todist \tmtwo$, we have
      $\tmtwo\in\SN=\interp\tau$.
    \item Let $\tm\in\interp{A\Rightarrow B}$. Then, for all $\tmthree\in\interp A$, we
      have $\tm\tmthree\in\interp B$. By \ih on $B$, since $\tm\tmthree\todist
      \tmtwo\tmthree$, we have $\tmtwo\tmthree\in\interp B$ and so $\tmtwo\in\interp{A\Rightarrow B}$.
    \item Let $\tm\in\interp{A_1\wedge A_2}$. Then, $\pi_i\tm\in\interp{A_i}$, for
      $i=1,2$.
      By \ih on $A_i$, since $\pi_i\tm\todist\pi_i\tmtwo$, we have $\pi_i\tmtwo\in\interp{A_i}$ and so $\tmtwo\in\interp{A_1\wedge A_2}$.    
    \end{itemize}
  \item[CR3]  By induction on $A$. Let $\tm$ be neutral. Cases:
    \begin{itemize}
    \item Let $\Red \tm\subseteq\interp\tau=\SN$. Then $\tm\in\SN=\interp\tau$.
    \item Let $\Red \tm\subseteq\interp{A\Rightarrow B}$. Then for each $\tm'\in\Red
      \tm$, we have that for all $\tmtwo\in\interp A$, $\tm'\tmtwo\in\interp B$. Since $\tm\tmtwo$ is neutral, if we show that $\Red{\tm\tmtwo}\subseteq\interp B$ then the \ih on $B$ gives $\tm\tmtwo\in\interp B$ and so $\tm\in\interp{A\Rightarrow B}$. 
      
      Since, by CR1 on $\interp A$, we have $\tmtwo\in\SN$, we show that
      $\Red{\tm\tmtwo}\subseteq\interp B$  by a second induction on $\size \tmtwo$. The
      possible reducts of $\tm\tmtwo$ are:
      \begin{itemize}
      \item $\tm'\tmtwo$, with $\tm\todist \tm'$, which is in $\interp B$ by hypothesis,
      \item $\tm\tmtwo'$, with $\tmtwo\todist \tmtwo'$, then by the second induction
      hypothesis $\Red{\tm\tmtwo'}\subseteq\interp B$ and by \ih $\tm\tmtwo' \in \interp B$.
      \end{itemize}
       Note that since $\tm$ is neutral there are no other reductions from $\tm\tmtwo$.
    \item Let $\Red \tm\subseteq\interp{A_1\wedge A_2}$. Then for each $\tm'\in\Red \tm$,
      we have that $\pi_i\tm'\in\interp{A_i}$, for $i=1,2$. We show
      that $\Red{\pi_i\tm}\subseteq\interp{A_i}$, which---since
      $\pi_i\tm$ is neutral---by \ih implies
      $\pi_i\tm\in\interp{A_i}$, and so $\tm\in\interp{A_1\wedge A_2}$.

      Since $\tm$ is neutral, its only possible reducts have the form
      $\pi_i\tm'$, with $\tm\todist \tm'$, which are in $\interp{A_i}$ by hypothesis. 
       \qedhere
    \end{itemize}
  \end{description}
\end{proof}

\paragraph{Stability of the interpretation by isomorphism.} Finally, we come to the point where distributivity plays a role. Here we prove that the interpretation of types is stable by $\equiv$, that is, if $A\equiv B$ then $\interp A = \interp B$. We need an auxiliary lemma stating a sort of stability by anti-reduction of $\interp A$ with respect to the standard rewriting rules of $\beta$ and projection.

\begin{lemma}
\label{l:stability-aux} 
\hfill
 \begin{enumerate}  
  \item \label{p:stability-aux-three}
  If $\tm, \tmtwo\in \SN$ and $\tm\isub\var\tmtwo \in \interp A$ then $(\la\var\tm)\tmtwo \in \interp A$.
  
  \item \label{p:stability-aux-five}
  If $\tm_i \in \interp{A_i}$ then
  $\pi_i\pair{\tm_1}{\tm_2} \in \interp{A_i}$, for $i=1,2$.
 \end{enumerate}
\end{lemma}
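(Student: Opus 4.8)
The plan is to derive both points from property \textbf{CR3}, exploiting that in each case the term under scrutiny is neutral: $(\la\var\tm)\tmtwo$ is an application and $\pi_i\pair{\tm_1}{\tm_2}$ is a projection. So it suffices, for point (1), to show $\Red{(\la\var\tm)\tmtwo}\subseteq\interp A$, and, for point (2), to show $\Red{\pi_i\pair{\tm_1}{\tm_2}}\subseteq\interp{A_i}$. In both cases the only real work is an auxiliary induction needed to treat the reducts obtained by reducing inside the immediate subterms, and it is well-founded precisely thanks to the $\SN$ hypotheses (in point (1)) and to \textbf{CR1} (in point (2)).

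For point (1), I would first note that the one-step reducts of $(\la\var\tm)\tmtwo$ are exactly: the $\beta$-reduct $\tm\isub\var\tmtwo$; the terms $(\la\var\tm')\tmtwo$ with $\tm\todist\tm'$; and the terms $(\la\var\tm)\tmtwo'$ with $\tmtwo\todist\tmtwo'$ — no root $@_\times$ step can fire since the left component is an abstraction, and inside $\la\var\tm$ only reductions of $\tm$ are possible. The first reduct lies in $\interp A$ by hypothesis. For the other two I would run a secondary induction on $\eval\tm+\eval\tmtwo$, which is finite because $\tm,\tmtwo\in\SN$. In the case $(\la\var\tm')\tmtwo$: by left substitutivity (\reflemmap{substitutivity}{left}) $\tm\isub\var\tmtwo\todist^*\tm'\isub\var\tmtwo$, so $\tm'\isub\var\tmtwo\in\interp A$ by repeated use of \textbf{CR2}; moreover $\tm'\in\SN$ and $\eval{\tm'}<\eval\tm$, so the secondary induction hypothesis yields $(\la\var\tm')\tmtwo\in\interp A$. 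The case $(\la\var\tm)\tmtwo'$ is symmetric, using right substitutivity (\reflemmap{substitutivity}{right}) to obtain $\tm\isub\var{\tmtwo'}\in\interp A$ and the decrease $\eval{\tmtwo'}<\eval\tmtwo$. Having checked all reducts, \textbf{CR3} gives $(\la\var\tm)\tmtwo\in\interp A$.

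For point (2), I would argue in the same style. Since $\tm_i\in\interp{A_i}\subseteq\SN$ by \textbf{CR1}, both $\eval{\tm_1}$ and $\eval{\tm_2}$ are finite, and I would induct on their sum. The one-step reducts of $\pi_i\pair{\tm_1}{\tm_2}$ are $\tm_i$ — in $\interp{A_i}$ by hypothesis — together with $\pi_i\pair{\tm_1'}{\tm_2}$ (for $\tm_1\todist\tm_1'$) and $\pi_i\pair{\tm_1}{\tm_2'}$ (for $\tm_2\todist\tm_2'$); no $\pi_\lambda$ step is available since the argument of $\pi_i$ is a pair. In each of the latter cases \textbf{CR2} re-establishes $\tm_1'\in\interp{A_1}$ (resp. $\tm_2'\in\interp{A_2}$), the measure $\eval{\tm_1}+\eval{\tm_2}$ strictly decreases, and the induction hypothesis applies; then \textbf{CR3} concludes $\pi_i\pair{\tm_1}{\tm_2}\in\interp{A_i}$.

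No step here is genuinely difficult. The only delicate point — and the one I would be careful to state correctly — is the bookkeeping of the secondary inductions: one must observe that an internal reduction of $\tm$ is mirrored, through the substitutivity lemma, by a reduction sequence on $\tm\isub\var\tmtwo$, so that \textbf{CR2} can be reused to keep the hypothesis "$\tm\isub\var\tmtwo\in\interp A$" (resp. "$\tm_i\in\interp{A_i}$") valid for the smaller instance. Everything else is routine.
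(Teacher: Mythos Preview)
Your proposal is correct and follows essentially the same approach as the paper: both points are proved via \textbf{CR3} after showing $\Red{\cdot}\subseteq\interp{\cdot}$ by a secondary induction on $\eval\tm+\eval\tmtwo$ (resp.\ $\eval{\tm_1}+\eval{\tm_2}$), with the same case analysis of reducts and the same use of left/right substitutivity and \textbf{CR2} to propagate the hypotheses. The only cosmetic difference is that left substitutivity actually gives a single step $\tm\isub\var\tmtwo\todist\tm'\isub\var\tmtwo$, so one application of \textbf{CR2} suffices there rather than ``repeated use''.
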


\begin{proof}
 \hfill
 \begin{enumerate}  
  \item By induction on $\eval\tm + \eval\tmtwo$. We show that $\Red{(\la\var\tm)\tmtwo}\subseteq\interp{A}$, and obtain the statement by CR3. Cases:
  \begin{itemize}
   \item $(\la\var\tm)\tmtwo \todist (\la\var\tm')\tmtwo$ with $\tm \todist \tm'$. We can apply the \ih because if $\tm\todist \tm'$ then $\tm\isub\var\tmtwo \todist \tm'\isub\var\tmtwo$ by left substitutivity of $\todist$ (\reflemmap{substitutivity}{left}), and $\tm'\isub\var\tmtwo\in \interp A$ by CR2. By \ih, $(\la\var\tm')\tmtwo \in \interp A$.
   \item $(\la\var\tm)\tmtwo \todist (\la\var\tm)\tmtwo'$ with $\tmtwo \todist \tmtwo'$. We can apply the \ih because if $\tmtwo\todist \tmtwo'$ then $\tm\isub\var\tmtwo \todist^* \tm\isub\var{\tmtwo'}$ by right substitutivity of $\todist$ (\reflemmap{substitutivity}{right}), and $\tm\isub\var{\tmtwo'}\in\interp A$ by CR2. By \ih, $(\la\var\tm)\tmtwo' \in \interp A$.
   \item $(\la\var\tm)\tmtwo \tob \tm\isub\var\tmtwo$, which is in $\interp A$ by hypothesis.
  \end{itemize}

    \item By CR1 we have $\tm_i\in \SN$. By induction on $\eval{\tm_1} + \eval{\tm_2}$. The possible reducts of $\pi_i\pair{\tm_1}{\tm_2}$ are:
      \begin{itemize}
      \item $\tm_i$, because of a $\topri$ step. Then $\tm_i \in \interp{A_i}$ by hypothesis.
      \item $\pi_i\pair{\tm_1'}{\tm_2}$, with $\tm_1\todist \tm_1'$. We can apply the \ih because $\interp{A_1} \ni \tm_1 \todist \tm_1'$ which is in $\interp{A_1}$ by CR2. Then $\pi_i\pair{\tm_1'}{\tm_2} \in \interp{A_1}$ by \ih      
      \item $\pi_i\pair{\tm_1}{\tm_2'}$, with $\tm_2\todist \tm_2'$. As the previous case, just switching coordinate of the pair.      
        \qedhere
      \end{itemize}

  \end{enumerate}
\end{proof}

\begin{lemma}[Stability by isomorphism]\label{lem:stability}
  If $A\equiv B$, then $\interp A=\interp B$.
\end{lemma}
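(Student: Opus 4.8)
The plan is an induction on the derivation of $A\equiv B$, in which everything but the distributivity axiom is routine. Reflexivity, symmetry, and transitivity are immediate from the \ih. Each of the four congruence rules follows from the compositional shape of the interpretation: for instance $\interp{A\Rightarrow B}$ mentions the argument type only through $\interp A$, so from the \ih $\interp A=\interp C$ one gets at once $\interp{A\Rightarrow B}=\interp{C\Rightarrow B}$, and symmetrically for the rule acting on the right of $\Rightarrow$ and for the two rules acting inside $\wedge$. Hence the only real case is the distributivity axiom $A\Rightarrow B\wedge C\equiv(A\Rightarrow B)\wedge(A\Rightarrow C)$.

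Writing $B_1\defeq B$ and $B_2\defeq C$, unfolding the definitions gives that $\tm\in\interp{A\Rightarrow B\wedge C}$ iff $\pi_i(\tm\tmtwo)\in\interp{B_i}$ for all $\tmtwo\in\interp A$ and $i=1,2$, while $\tm\in\interp{(A\Rightarrow B)\wedge(A\Rightarrow C)}$ iff $(\pi_i\tm)\tmtwo\in\interp{B_i}$ for all $\tmtwo\in\interp A$ and $i=1,2$. The two conditions differ only by whether the projection is taken before or after feeding the argument $\tmtwo$, and the distributive rules $\rtopaap$ and $\rtoprab$ are precisely what connects the two. I would prove the two inclusions separately; they are perfectly dual, so I spell out only $\interp{A\Rightarrow B\wedge C}\subseteq\interp{(A\Rightarrow B)\wedge(A\Rightarrow C)}$.

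So fix $\tm\in\interp{A\Rightarrow B\wedge C}$, $\tmtwo\in\interp A$ and $i\in\{1,2\}$; the goal is $(\pi_i\tm)\tmtwo\in\interp{B_i}$. By CR1, $\tm,\tmtwo\in\SN$, hence $\eval\tm$ and $\eval\tmtwo$ are finite; and since $(\pi_i\tm)\tmtwo$ is neutral, CR3 reduces the goal to $\Red{(\pi_i\tm)\tmtwo}\subseteq\interp{B_i}$, which I would prove by induction on $\eval\tm+\eval\tmtwo$. A one-step reduct has one of the following shapes. (i) $(\pi_i\tm)\tmtwo'$ with $\tmtwo\todist\tmtwo'$: then $\tmtwo'\in\interp A$ by CR2 and $\eval{\tmtwo'}<\eval\tmtwo$, so the induction applies. (ii) $(\pi_i\tm')\tmtwo$ from a step inside $\tm$, i.e.\ $\tm\todist\tm'$: then $\tm'\in\interp{A\Rightarrow B\wedge C}$ by CR2 and $\eval{\tm'}<\eval\tm$, so the induction applies. (iii) a head commutation, possible only once $\tm$ reaches head form. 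If $\tm=\pair{\tm_1}{\tm_2}$, the reduct is $\tm_i\tmtwo$, and it lies in $\interp{B_i}$ because the hypothesis gives $\pi_i(\pair{\tm_1}{\tm_2}\tmtwo)\in\interp{B_i}$ and $\pi_i(\pair{\tm_1}{\tm_2}\tmtwo)\todist\pi_i\pair{\tm_1\tmtwo}{\tm_2\tmtwo}\todist\tm_i\tmtwo$, so CR2 concludes. If $\tm=\la\var\tmthree$, the reduct is $(\la\var\pi_i\tmthree)\tmtwo$; first $\pi_i(\tmthree\isub\var\tmtwo)\in\interp{B_i}$, since the hypothesis gives $\pi_i((\la\var\tmthree)\tmtwo)\in\interp{B_i}$ and $\pi_i((\la\var\tmthree)\tmtwo)\tob\pi_i(\tmthree\isub\var\tmtwo)$, whence CR2 applies; then $(\la\var\pi_i\tmthree)\tmtwo\in\interp{B_i}$ by \reflemmap{stability-aux}{three}, whose side conditions $\pi_i\tmthree,\tmtwo\in\SN$ hold by CR1.

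The inclusion $\supseteq$ is entirely symmetric: $\pi_i(\tm\tmtwo)$ is neutral, CR3 plus the induction on $\eval\tm+\eval\tmtwo$ reduce it to the head reducts, which are $\pi_i(\tmthree\isub\var\tmtwo)$ when $\tm=\la\var\tmthree$ (from a $\beta$-step; handled by CR2 along $(\pi_i(\la\var\tmthree))\tmtwo\todist(\la\var\pi_i\tmthree)\tmtwo\tob\pi_i(\tmthree\isub\var\tmtwo)$) and $\pi_i\pair{\tm_1\tmtwo}{\tm_2\tmtwo}$ when $\tm=\pair{\tm_1}{\tm_2}$ (from an application of $\rtopaap$; handled by \reflemmap{stability-aux}{five}, after deriving $\tm_1\tmtwo\in\interp B$ and $\tm_2\tmtwo\in\interp C$ from $(\pi_j\pair{\tm_1}{\tm_2})\tmtwo\todist\tm_j\tmtwo$ and CR2, for $j=1,2$). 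I expect the only delicate point of the whole proof to be this reduct analysis: listing exhaustively the one-step reducts of $(\pi_i\tm)\tmtwo$ and of $\pi_i(\tm\tmtwo)$ — in particular not forgetting that once $\tm$ becomes an abstraction or a pair the subterms $\pi_i\tm$ and $\tm\tmtwo$ themselves become redexes ($\rtoprab$, resp.\ $\rtopri$, $\rtopaap$, or $\beta$) — and checking that the strong-normalization side conditions needed to run the $\eval$-induction and to apply \reflemma{stability-aux} are always supplied by CR1. Conceptually there is nothing hard: the $\tm$- and $\tmtwo$-steps are absorbed by the induction, and each head reduct is closed by a short $\todist$-reduction together with CR2, in the abstraction cases followed by an appeal to \reflemma{stability-aux}.
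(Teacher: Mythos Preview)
Your proposal is correct and follows essentially the same route as the paper: induction on the derivation of $A\equiv B$, with the only non-trivial case being distributivity, handled for each inclusion via CR3 and an inner induction on $\eval\tm+\eval\tmtwo$, with the head-reduct cases closed respectively by \reflemmap{stability-aux}{three} and \reflemmap{stability-aux}{five} together with CR2. The case analysis, the use of CR1 to justify the well-foundedness of the inner induction, and the two-step reductions used to invoke CR2 on the head reducts all match the paper's proof exactly.
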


\begin{proof}
  By induction on $A\equiv B$. The only interesting case is the base case $A\Rightarrow
  B_1\wedge B_2\equiv(A\Rightarrow B_1)\wedge(A\Rightarrow B_2)$. The inductive 
  cases follow immediately from the \ih
  
  We prove $\interp{A\Rightarrow B_1\wedge B_2}=\interp{(A\Rightarrow
    B_1)\wedge(A\Rightarrow B_2)}$ by proving the double inclusion.
  \begin{itemize}
  \item Let $\tm\in\interp{A\Rightarrow B_1\wedge B_2}$.
    Then for all $\tmtwo\in\interp A$ we have $\tm\tmtwo\in\interp{B_1\wedge B_2}$, so
    \begin{equation}
      \label{eq:pirs}
    \pi_i(\tm\tmtwo)\in\interp{B_i}
    \end{equation}
    We need to prove that $(\pi_i\tm)\tmtwo\in\interp{B_i}$. Since this term is
    neutral, we prove that $\Red{(\pi_i\tm)\tmtwo}\subseteq\interp{B_i}$ and conclude
    by CR3.
    By CR1 and \refequation{pirs}, $\tm$ and $\tmtwo$ are in $\SN$, so we proceed by induction on $\eval \tm + \eval \tmtwo$.
    The possible one-step reducts fired from $(\pi_i\tm)\tmtwo$ are:
    \begin{itemize}
    \item $(\pi_i\tm')\tmtwo$, with $\tm\todist \tm'$, then \ih applies.
    \item $(\pi_i\tm)\tmtwo'$, with $\tmtwo\todist \tmtwo'$, then \ih applies.    
    \item $\tm_i\tmtwo$, if $\tm=\pair{\tm_1}{\tm_2}$. Since
      $\pi_i(\tm\tmtwo)=\pi_i(\pair{\tm_1}{\tm_2} \tmtwo)\todist\pi_i\pair{\tm_1\tmtwo}{\tm_2\tmtwo}\todist \tm_1\tmtwo$, by \eqref{eq:pirs} and CR2 we have
      $\tm_i\tmtwo\in\interp{B_i}$.
    \item $(\lambda x.\pi_i\tmthree)\tmtwo$ if $\tm=\lambda x.\tmthree$. Then we can apply \reflemmap{stability-aux}{three}, since we know that $\tmthree$ and $\tmtwo$ are $\SN$ and that $\pi_i (\tm \tmtwo) = \pi_i ((\la\var \tmthree) \tmtwo) \tob \pi_i \tmthree\isub\var\tmtwo$ which by \eqref{eq:pirs} and CR2 is in $\interp{B_i}$. We obtain $(\lambda x.\pi_i\tmthree)s\in \interp{B_i}$
    \end{itemize}

  \item Let $\tm\in\interp{(A\Rightarrow B_1)\wedge(A\Rightarrow B_2)}$.
    Then $\pi_i\tm\in\interp{A\Rightarrow B_i}$, and so for all $\tmtwo\in\interp A$,
    we have $(\pi_i\tm)\tmtwo\in\interp{B_i}$. By CR1 we have $\tm,\tmtwo\in\SN$, so we
    proceed by induction on $\eval \tm + \eval \tmtwo$ to show that
    $\Red{\pi_i(\tm\tmtwo)}\subseteq\interp{B_i}$, which implies
    $\pi_i(\tm\tmtwo)\in\interp{B_i}$ and so $\tm\tmtwo\in\interp{B_1\wedge B_2}$, and then
    $\tm\in\interp{A\Rightarrow B_1\wedge B_2}$. The possible reducts of 
    $\pi_i(\tm\tmtwo)$ are:
    \begin{itemize}
    \item $\pi_i(\tm'\tmtwo)$ with $\tm\todist \tm'$, then the \ih applies.
    \item $\pi_i(\tm\tmtwo')$ with $\tmtwo\todist \tmtwo'$, then the \ih applies.
    \item $\pi_i(\tmthree\isub x\tmtwo)$ if $\tm=\lambda x.\tmthree$. Then since
      $(\pi_i\tm)\tmtwo\in\interp{B_i}$, we have $(\pi_i\lambda x.\tmthree)\tmtwo\in\interp{B_i}$ and
      $(\pi_i\lambda x.\tmthree)\tmtwo\toprab(\lambda x.\pi_i\tmthree)\tmtwo\tob\pi_i(\tmthree\isub x\tmtwo)$, so, by
      CR2, $\pi_i(\tmthree\isub x\tmtwo)\in\interp{B_i}$.
    \item $\pi_i\pair{\tm_1\tmtwo}{\tm_2\tmtwo}$ if $\tm=\pair{\tm_1}{\tm_2}$. We apply \reflemmap{stability-aux}{five}, since we have
        $(\pi_i\pair{\tm_1}{\tm_2}) \tmtwo\in\interp{B_i}$ and
        $(\pi_i\pair{\tm_1}{\tm_2}) \tmtwo\topri \tm_i\tmtwo$, so, by CR2, $\tm_i\tmtwo\in\interp{B_i}$. We then obtain $\pi_i\pair{\tm_1\tmtwo}{\tm_2\tmtwo}\in \interp{B_i}$.
        \qedhere
    \end{itemize}
  \end{itemize}
\end{proof}

\paragraph{Adequacy.} The last step is to prove what is usually called \emph{adequacy}, that i\tmtwo, that typability of $\tm$ with $A$ implies that $\tm\in \interp A$, up to a substitution $\theta$ playing the role of the typing context $\Gamma$. The proof is standard, the distributive rules do not play any role.

\begin{definition}[Valid substitution]
  We say that a substitution $\theta$ is valid with respect to a context
  $\Gamma$ (notation $\theta\vDash\Gamma$) if for all $x:A\in\Gamma$, we have
  $\theta x\in\interp A$.
\end{definition}

 \begin{lemma}
   [Adequacy]\label{lem:adequacy}
   If $\Gamma\vdash \tm:A$ and $\theta\vDash\Gamma$, then $\theta \tm\in\interp A$.
 \end{lemma}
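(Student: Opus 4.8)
The plan is to prove adequacy by a routine induction on the derivation of $\Gamma\vdash\tm:A$, in the style of Tait's method; as announced just before the statement, the distributive rules contribute nothing at this stage, since their whole effect has already been localised in the stability lemma (Lemma~\ref{lem:stability}) and the anti-reduction lemma (\reflemma{stability-aux}). Before starting the induction I would record one small observation, needed only in the abstraction case: every variable belongs to every interpretation. Indeed $\var$ is neutral and $\Red\var=\emptyset\subseteq\interp A$, so $\var\in\interp A$ by CR3. In particular the identity substitution is valid with respect to any context, and, more generally, whenever $\theta\vDash\Gamma$ and $\var\notin\dom\theta$ one also has $\theta\cup\set{\var\mapsto\var}\vDash\Gamma,\var:A$ for any $A$.

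Then I would run through the typing rules, tacitly assuming bound variables avoid $\dom\theta$. For $(ax)$ the conclusion is exactly the definition of $\theta\vDash\Gamma$. For $(\equiv)$ the induction hypothesis gives $\theta\tm\in\interp A$, and Lemma~\ref{lem:stability} moves this membership to $\interp B$ along $A\equiv B$. For $(\Rightarrow_e)$ and the two $(\wedge_{e_i})$ one applies the induction hypothesis to the premise(s) and then just unfolds the definition of $\interp{A\Rightarrow B}$, resp. of $\interp{A_1\wedge A_2}$. For $(\wedge_i)$ the induction hypothesis gives $\theta\tm_i\in\interp{A_i}$ for $i=1,2$, and then \reflemmap{stability-aux}{five} yields $\pi_i\pair{\theta\tm_1}{\theta\tm_2}\in\interp{A_i}$, which is precisely what membership of $\pair{\theta\tm_1}{\theta\tm_2}$ in $\interp{A_1\wedge A_2}$ demands.

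The single case carrying any content is $(\Rightarrow_i)$: $\Gamma\vdash\la\var\tm:A\Rightarrow B$ from $\Gamma,\var:A\vdash\tm:B$. Fix an arbitrary $\tmtwo\in\interp A$ and set $\theta'\defeq\theta\cup\set{\var\mapsto\tmtwo}$; then $\theta'\vDash\Gamma,\var:A$, and the induction hypothesis gives $(\theta\tm)\isub\var\tmtwo=\theta'\tm\in\interp B$. Taking instead $\theta''\defeq\theta\cup\set{\var\mapsto\var}$ — valid by the observation above — the induction hypothesis also gives $\theta\tm=\theta''\tm\in\interp B\subseteq\SN$ by CR1, while $\tmtwo\in\SN$ by CR1 too. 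Hence \reflemmap{stability-aux}{three} applies and gives $(\la\var\theta\tm)\tmtwo\in\interp B$. Since $\tmtwo\in\interp A$ was arbitrary, $\theta(\la\var\tm)=\la\var\theta\tm\in\interp{A\Rightarrow B}$, as required.

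With adequacy in hand, strong normalisation is immediate: the identity substitution is valid for every $\Gamma$, so $\Gamma\vdash\tm:A$ forces $\tm\in\interp A$, hence $\tm\in\SN$ by CR1. I do not expect a genuine obstacle here; the only mildly delicate spot is producing the two $\SN$ premises required to invoke \reflemmap{stability-aux}{three} in the abstraction case, and that is exactly what the ``variables are reducible'' observation is for. Everything genuinely new about distributivity was already absorbed in Lemma~\ref{lem:stability}.
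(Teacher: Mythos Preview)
Your proof is correct and follows the same overall induction on the typing derivation as the paper, but it is organised more cleanly in the two introduction cases. For $(\Rightarrow_i)$ and $(\wedge_i)$ the paper re-runs inline the very inner inductions that are already packaged as \reflemma{stability-aux}, whereas you simply invoke that lemma; this avoids duplication and is the natural factoring. You also make explicit a step the paper leaves tacit in the $(\Rightarrow_i)$ case: the paper asserts $\theta\tm\in\SN$ ``by CR1'' without saying which interpretation $\theta\tm$ lies in, and your ``variables are reducible'' observation (extending $\theta$ by $\var\mapsto\var$ so as to apply the induction hypothesis) is exactly the missing justification. The two proofs are equivalent in spirit; yours just routes through the auxiliary lemma as one would expect after having proved it.
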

 \begin{proof}
   By induction on the derivation of $\Gamma\vdash \tm:A$.
   \begin{itemize}
     \setlength\itemsep{1mm}
   \item $\vcenter{\infer[(ax)]{\Gamma,x:A\vdash x:A}{}}$
     \quad
     Since $\theta\vDash\Gamma,x:A$,
     we have $\theta x\in\interp A$.
   \item $\vcenter{\infer[(\Rightarrow_i)]{\Gamma\vdash\lambda x.\tm:A\Rightarrow
       B}{\Gamma,x:A\vdash \tm:B}}$\\

     By \ih, if $\theta'\vDash\Gamma,x:A$, then
     $\theta'\tm\in\interp B$. Let $\tmtwo\in\interp A$, we have to prove that
     $\theta(\lambda x.\tm)\tmtwo=(\lambda x.\theta \tm)\tmtwo\in\interp B$. By CR1,
     $\tmtwo,\theta \tm\in\SN$, so we proceed by a second induction on 
     $\size \tmtwo+\size{\theta \tm}$ to show
     that $\Red{(\lambda x.\theta \tm)\tmtwo}\subseteq\interp B$, which implies
     $(\lambda x.\theta \tm)\tmtwo\in\interp B$. The possible reducts of $(\lambda
     x.\theta \tm)\tmtwo$ are:
     \begin{itemize}
     \item $(\lambda x.\tm')\tmtwo$, with $\theta \tm\todist \tm'$, then the second \ih applies.
     \item $(\lambda x.\theta \tm)\tmtwo'$, with $\tmtwo\todist \tmtwo'$, then the second \ih applies.
     \item $\theta \tm\isub xs$, then take $\theta'=\theta,x\mapsto s$ and notice that
       $\theta'\vDash\Gamma,x:A$, so $\theta \tm\isub x\tmtwo\in\interp B$.
     \end{itemize}
   \item $\vcenter{\infer[(\Rightarrow_e)]{\Gamma\vdash \tm\tmtwo:B}{\Gamma\vdash
         \tm:A\Rightarrow B & \Gamma\vdash \tmtwo:B}}$\\

     By \ih, $\theta \tm\in\interp{A\Rightarrow B}$ and
     $\theta \tmtwo\in\interp B$, so, by definition, $\theta \tm\theta
     s=\theta(\tm\tmtwo)\in\interp B$.

   \item $\vcenter{\infer[(\wedge_i)]{\Gamma\vdash\pair{\tm_1}{\tm_2} :A_1\wedge A_2}{\Gamma\vdash \tm_1:A_1 & \Gamma\vdash \tm_2:A_2}}$\\

     By \ih, $\theta \tm_i\in\interp{A_i}$, for $i=1,2$. By
     CR1 we have $\theta \tm_i\in\SN$, hence we proceed by a second induction on $\size{\theta \tm_1}+\size{\theta \tm_2}$ to show that $\Red{\pi_i\pair{\theta \tm_1}{\theta \tm_2}}\subseteq\interp{A_1}$, which, by
  CR3 implies $\pi_i\pair{\theta \tm_1}{\theta \tm_2}\in\interp{A_i}$ and so $\pair{\theta \tm_1}{\theta \tm_2}\in\interp{A_1\wedge A_2}$.

  The possible one-step reducts of $\pi_i\pair{\theta \tm_1}{\theta \tm_2}$ are:
  \begin{itemize}
  \item $\pi_i\pair{\tm'}{\theta \tm_2}$, with $\theta \tm_1\todist \tm'$, then the second \ih applies.
  \item $\pi_i\pair{\theta \tm_1}{\tm'}$, with $\theta \tm_2\todist \tm'$, then the second \ih applies.
  \item $\theta \tm_i\in\interp{A_i}$.
  \end{itemize}
     
   \item $\vcenter{\infer[(\wedge_{e_i})]{\Gamma\vdash\pi_i\tm:A_i}{\Gamma\vdash \tm:A_1\wedge A_2}}$\quad
     By \ih, $\theta \tm\in\interp{A_1\wedge A_2}$, so, by definition, $\pi_i(\theta \tm)=\theta\pi_i\tm\in\interp{A_i}$.

   \item $\vcenter{\infer[(\equiv)]{\Gamma\vdash \tm:B}{\Gamma\vdash \tm:A & A\equiv
         B}}$\quad
     By \ih, $\theta \tm\in\interp A$, so, by Lemma~\ref{lem:stability}, $\theta \tm\in\interp B$.
     \qedhere
   \end{itemize}
 \end{proof}

\begin{theorem}[Strong normalisation]
  \label{thm:SN}
  If $\Gamma\vdash \tm:A$, then $\tm\in\SN$.
\end{theorem}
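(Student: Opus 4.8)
The plan is to derive strong normalization as the standard closing corollary of the adequacy lemma (Lemma~\ref{lem:adequacy}), instantiated at the identity substitution. First I would check that the identity substitution $\theta$, defined by $\theta x = x$ for every variable $x$, is valid with respect to an arbitrary context $\Gamma$, i.e.\ that $\theta \vDash \Gamma$. Indeed, for each $x:A \in \Gamma$ the term $x$ is neutral and admits no reduction step, so $\Red x = \emptyset \subseteq \interp A$ holds vacuously, and CR3 then gives $x = \theta x \in \interp A$. Hence $\theta \vDash \Gamma$ for every $\Gamma$.

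Next, from $\Gamma \vdash \tm : A$ and $\theta \vDash \Gamma$, Lemma~\ref{lem:adequacy} yields $\tm = \theta \tm \in \interp A$. Finally, property CR1 gives $\interp A \subseteq \SN$, whence $\tm \in \SN$, which is exactly the statement.

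At this stage there is, in effect, no obstacle: all the real work has already been front-loaded into the three reducibility properties CR1--CR3, into the stability of the interpretation under the type isomorphism $\equiv$ (Lemma~\ref{lem:stability}), which is the only place where distributivity genuinely intervenes---via the anti-reduction Lemma~\ref{l:stability-aux}---and into adequacy itself. The only point specific to the final theorem is the validity of the identity substitution, and there the single subtlety worth a line is that a variable is a neutral term with an empty set of reducts, so that CR3 applies trivially. Had anything been hard in the whole development, it would have been the stability lemma, whose distributivity case required showing that $(\pi_i \tm)\tmtwo$ and $\pi_i(\tm\tmtwo)$ end up in the same interpretation; the present theorem merely harvests that effort.
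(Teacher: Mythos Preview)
Your proposal is correct and matches the paper's own proof essentially step for step: use CR3 on the neutral, irreducible variables to see that the identity substitution validates any context, apply adequacy to get $\tm\in\interp A$, and conclude by CR1. The extra paragraph of commentary on where the real work lies (the stability lemma for $\equiv$) is accurate and in the spirit of the paper's narrative.
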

\begin{proof}
  By Lemma~\ref{lem:adequacy}, if $\theta\vDash\Gamma$, $\theta \tm\in\interp A$.
  By CR3, variables---which are neutral terms---are in all the interpretations, and so the identity
  substitution is valid in any context, in particular, in $\Gamma$. Hence, $\tm\in\interp
  A$. By CR1, $\interp A\subseteq\SN$. Hence, $\tm\in\SN$.
\end{proof}

\section{Discussion and conclusions}
\label{sect:variations}

\paragraph{The Unit Type.} The point of the paper is the fact that the distributive rewriting rules and typing  up to distributivity perfectly marry together. The elimination of clashes, on the other hand, is a nice consequence of our approach that should not be taken too seriously, because it does not scale up, as we now show. 

Let's consider the extension of the distributive $\l$-calculus with the unit type $\top$ and a construct $\star$ of type $\top$. In this extended setting it is still possible to interpret distributivity as in the previous sections, and all our results still holds. There are however two new clashes, namely $\star\, \tmthree$ and $\pi_i \star$. If one makes the further step of eliminating them via new rules and type them up to new isomorphisms, then unfortunately normalization breaks, as we now show. 

Consider their natural commutation rules:
\[\begin{array}{c\colspace \colspace \colspace c\colspace cc}
\star\, \tmthree \to \star & \pi_i \star \to \star & i = 1,2
\end{array}\]
To have subject reduction along the same lines of what we did, one needs to work up to the following two isomorphisms:
\[\begin{array}{c\colspace \colspace \colspace c\colspace cc}
A \Rightarrow \top \ \equiv \ \top& \top \wedge \top \ \equiv \ \top
\end{array}\]
Note that $A \Rightarrow \top  \equiv  \top$ has to be valid for any type $A$, therefore in
particular it is true for $\top$, giving $\top \Rightarrow \top \equiv
\top$. Now, unfortunately, one can type the diverging term
$\Omega \defeq (\la\var\var\var) (\la\var\var\var)$, as the following
derivation shows, and in fact all the terms of the ordinary $\l$-calculus---said
differently strong normalization breaks.
\[
\scalebox{0.9}{
  \infer[(\Rightarrow_e)]{\vdash(\lambda x.xx)(\lambda x.xx):\top}
  {
      \infer[(\Rightarrow_i)]{\vdash\lambda x.x:\top\Rightarrow\top}
      {
        \infer[(\Rightarrow_e)]{x:\top\vdash xx:\top}
        {
          \infer[(\equiv)]{x:\top\vdash x:\top\Rightarrow\top}{\infer[(ax)]{x:\top\vdash x:\top}{}}
          &\infer[(ax)]{x:\top\vdash x:\top}{}
        }
      }
    &
    \infer[(\equiv)]{\vdash\lambda x.xx:\top}
    {
      \infer[(\Rightarrow_i)]{\vdash\lambda x.x:\top\Rightarrow\top}
      {
        \infer[(\Rightarrow_e)]{x:\top\vdash xx:\top}
        {
          \infer[(\equiv)]{x:\top\vdash x:\top\Rightarrow\top}{\infer[(ax)]{x:\top\vdash x:\top}{}}
          &\infer[(ax)]{x:\top\vdash x:\top}{}
        }
      }
    }
  }
}
\]
This example also reinforces the fact, already stressed in the introduction, that interpretations of type isomorphisms tend to break key properties. Distributivity, instead, is somewhat special, as it admits an interpretation that is conservative with respect to the properties of the underlying calculus.

\paragraph{Additional Distributivity Rules.} It is possible to add the two following distributive rewriting rules:
\[\begin{array}{c\colspace \colspace \colspace c\colspace cc}
\la\var\pair\tm\tmtwo \to \pair{\la\var\tm}{\la\var\tmtwo} & \pi_i (\tm \tmtwo) \to (\pi_i\tm) \tmtwo& i = 1,2
\end{array}\]
Subject reduction and strong normalization still hold. The problem is that the rewriting system is no longer orthogonal, since the following critical pairs are now possible:
  \begin{center}
    \begin{tikzpicture}
      \node at (0,0) {
    \begin{tikzcd}[column sep=5mm]
       \pi_i(\lambda x.\langle \tm_1,\tm_2 \rangle) \ar[d]\ar[r]&  \pi_i\langle \lambda x.\tm_1,\lambda x.\tm_2 \rangle\ar[d,dashed]\\
       \lambda x.\pi_i\langle \tm_1,\tm_2 \rangle\ar[r,dashed] & \lambda
       x.\tm_i\\
      \pi_i(\langle \tm_1,\tm_2 \rangle \tmtwo)\ar[d]\ar[r] & (\pi_i\langle \tm_1,\tm_2 \rangle)\tmtwo\ar[d,dashed]\\
      \pi_i\langle \tm_1\tmtwo,\tm_2\tmtwo \rangle\ar[r,dashed] & \tm_i\tmtwo
    \end{tikzcd}};
      \node at (6.5,0) {
    \begin{tikzcd}[column sep=5mm]
      (\lambda x.\langle \tm,\tmtwo \rangle)\tmthree\ar[d]\ar[r] & \langle \tm\isub\var\tmthree,\tmtwo\isub\var\tmthree \rangle \\
      \langle \lambda x.\tm,\lambda x.\tmtwo \rangle \tmthree\ar[r,dashed] &
       \langle (\lambda x.\tm)\tmthree,(\lambda x.\tmtwo)\tmthree \rangle\ar[u,dashed,near end,swap,"2"]\\
      \pi_i((\lambda x.\tm) \tmtwo)\ar[d]\ar[r] & \pi_i(\tm\isub\var\tmtwo)\\
      (\pi_i(\lambda x.\tm))\tmtwo\ar[r,dashed] & (\lambda x.\pi_i\tm)\tmtwo\ar[u,dashed]&
    \end{tikzcd}};
    \end{tikzpicture}
  \end{center}
While the pairs on the left side are easy to deal with, those on the right side have an unpleasant closing diagram and make the rewriting system much harder to study.

\paragraph{Conclusions.}
We have extended the $\l$-calculus with pairs with two additional commutation rules inspired by the distributivity isomorphism of simple types, and showed that it is a well behaved setting. In the untyped case, confluence, progress, and leftmost-outermost normalization are obtained essentially for free. In the typed case, subject reduction up to distributivity holds, as well as strong normalization. The proof of strong normalization, in particular, is a smooth adaptation of Tait's standard reducibility proof for the $\l$-calculus with pairs.

\bigskip

\noindent\textbf{Acknowledgements.}
This work has been partially funded by the ANR JCJC grant COCA HOLA ANR-16-CE40-004-01, the ECOS-Sud grant QuCa A17C03, and the French-Argentinian International Research Project SINFIN.

% APPENDIX
\newpage
\appendix
\section{Proofs Appendix}

\xrecap{Lemma}{Substitutivity of $\todist$}{l:substitutivity} 
{\ 
 \begin{enumerate}
  \item \emph{Left substitutivity}: if $\tm \todist \tm'$ then $\tm \isub\var\tmtwo \todist \tm'\isub\var\tmtwo$.
  
  \item \emph{Right substitutivity}: if $\tmtwo \todist \tmtwo'$ then $\tm \isub\var\tmtwo \todist^* \tm\isub\var{\tmtwo'}$.
 \end{enumerate}
 }

\begin{proof}\hfill
  \begin{enumerate}
  \item By induction on the relation $\todist$. Base cases:
    \begin{itemize}
    \item Let $\tm=(\la\vartwo\tmthree)\tmfour \rtob \tmthree\isub\vartwo\tmfour =
      \tm'$.
      Then,
      \begin{align*}
        \tm\isub\var\tmtwo
        &= ((\la\vartwo\tmthree)\tmfour)\isub\var\tmtwo
          =(\la\vartwo\tmthree\isub\var\tmtwo)\tmfour\isub\var\tmtwo\\
        &\rtob (\tmthree\isub\var\tmtwo)\isub\vartwo{\tmfour\isub\var\tmtwo}
          = (\tmthree\isub\vartwo\tmfour)\isub\var\tmtwo
          =\tm'\isub\var\tmtwo
      \end{align*}
    \item Let $\tm=\pi_i \pair{\tmthree_1}{\tmthree_2} \rtopri \tmthree_i = \tm'$. Then,
       \begin{align*}
         \tm\isub\var\tmtwo
         &= (\pi_i \pair{\tmthree_1}{\tmthree_2})\isub\var\tmtwo
           = \pi_i\pair{\tmthree_1\isub\var\tmtwo}{\tmthree_2\isub\var\tmtwo}\\
         &\rtopri \tmthree_i\isub\var\tmtwo
           = \tm'\isub\var\tmtwo
       \end{align*}
    \item Let $\tm = \pair{\tmthree}{\tmfour}\tmfive
      \rtopaap\pair{\tmthree\tmfive}{\tmfour\tmfive}=\tm'$. Then,
      \begin{align*}
        \tm\isub\var\tmtwo
        &=(\pair{\tmthree}{\tmfour}\tmfive)\isub\var\tmtwo
          =\pair{\tmthree\isub\var\tmtwo}{\tmfour\isub\var\tmtwo}(\tmfive\isub\var\tmtwo)\\
        &\rtopaap \pair{\tmthree\isub\var\tmtwo\tmfive\isub\var\tmtwo}{\tmfour\isub\var\tmtwo\tmfive\isub\var\tmtwo}
          =\pair{\tmthree\tmfive}{\tmfour\tmfive}\isub\var\tmtwo
          =\tm'\isub\var\tmtwo
      \end{align*}
    \item Let $\tm = \pi_i(\la\vartwo\tmthree) \rtoprab
      \la\vartwo\pi_i\tmthree=\tm'$, Then,
      \begin{align*}
        \tm\isub\var\tmtwo
        &=\pi(\la\vartwo\tmthree)\isub\var\tmtwo
          =\pi(\la\vartwo\tmthree\isub\var\tmtwo)\\
        &\rtoprab\la\vartwo\pi_i(\tmthree\isub\var\tmtwo)
          =(\la\vartwo\pi_i\tmthree)\isub\var\tmtwo
          =\tm'\isub\var\tmtwo
      \end{align*}
    \end{itemize}
    
    We treat the inductive cases compactly via contexts. First note that a straightforward induction on $\ctx$ shows that $\ctxp\tm\isub\var\tmtwo = \ctx\isub\var\tmtwo\ctxholep{\tm\isub\var\tmtwo}$, where the substitution $\ctx\isub\var\tmtwo$ on contexts is defined as expected. Now, consider $\tm=\ctxp\tmthree\Rew{a}\ctxp\tmfour=\tm'$ with
      $\tmthree\rootRew{a}\tmfour$, for some $a\in \set{\beta,
        @_\times,\pi_1,\pi_2,\pi_\lambda}$.
      By \ih,
      $\tmthree\isub\var\tmtwo\rootRew{a}\tmfour\isub\var\tmtwo$. Hence,
      \begin{align*}
        \tm\isub\var\tmtwo
        &=\ctxp\tmthree\isub\var\tmtwo
          =\ctx\isub\var\tmtwo\ctxholep{\tmthree\isub\var\tmtwo}\\
        &\Rew{a}\ctx\isub\var\tmtwo\ctxholep{\tmfour\isub\var\tmtwo}
          =\ctxp\tmfour\isub\var\tmtwo
          =\tm'\isub\var\tmtwo
      \end{align*}

  \item By induction on $\tm$.
    \begin{itemize}
    \item Let $\tm=\var$. Then,
      $$\tm\isub\var\tmtwo=\tmtwo \todist\tmtwo'=\tm\isub\var{\tmtwo'}$$
    \item Let $\tm=\vartwo$. Then,
      $$\tm\isub\var\tmtwo=\vartwo\todist^*\vartwo=\tm\isub\var{\tmtwo'}$$
    \item Let $\tm=\la\vartwo\tmthree$. By \ih,
      $\tmthree\isub\var\tmtwo\todist^*\tmthree\isub\var{\tmtwo'}$.
      Then,
      $$\tm\isub\var\tmtwo=\la\vartwo\tmthree\isub\var\tmtwo\todist^*\la\vartwo\tmthree\isub\var{\tmtwo'}=\tm\isub\var{\tmtwo'}$$
    \item Let $\tm=\tmthree\tmfour$. By \ih,
      $\tmthree\isub\var\tmtwo\todist^*\tmthree\isub\var{\tmtwo'}$
      and
      $\tmfour\isub\var\tmtwo\todist^*\tmfour\isub\var{\tmtwo'}$.
      Then,
      $$\tm\isub\var\tmtwo=(\tmthree\isub\var\tmtwo)(\tmfour\isub\var\tmtwo)
      \todist^*(\tmthree\isub\var{\tmtwo'})(\tmfour\isub\var{\tmtwo'})=\tm\isub\var{\tmtwo'}$$
    \item Let $\tm=\pair{\tmthree_1}{\tmthree_2}$. By \ih,
      for $i=1,2$,
      $\tmthree_i\isub\var\tmtwo \todist^* \tmthree_i\isub\var{\tmtwo'}$.
      Then,
      $$\tm\isub\var\tmtwo = \pair{\tmthree_1\isub\var\tmtwo}{\tmthree_2\isub\var\tmtwo}
      \todist^*\pair{\tmthree_1\isub\var{\tmtwo'}}{\tmthree_2\isub\var{\tmtwo'}}=\tm\isub\var{\tmtwo'}$$
    \item Let $\tm=\pi_i\tmthree$. By \ih
      $\tmthree\isub\var\tmtwo \todist^* \tmthree\isub\var{\tmtwo'}$.
      Then,
      $$\tm\isub\var\tmtwo = \pi_i{(\tmthree\isub\var\tmtwo)}
      \todist^*\pi_i{(\tmthree\isub\var{\tmtwo'})}=\tm\isub\var{\tmtwo'}
      $$
    \end{itemize}
  \end{enumerate}
\end{proof}

\xrecap{Lemma}{Substitution}{lem:substitution}
{
  If $\Gamma,x:A\vdash \tm:B$ and $\Gamma\vdash \tmtwo:A$, then $\Gamma\vdash \tm\isub x\tmtwo:B$.
}
\begin{proof}
  By induction on the derivation of $\Gamma,x:A\vdash\tm:B$.
  \begin{itemize}
  \item Let $\Gamma,x:A\vdash x:A$ as a consequence of rule $(ax)$. 
    Then, $x\isub x\tmtwo=\tmtwo$, and we have $\Gamma\vdash\tmtwo:A$.
  \item Let $\Gamma,y:B,x:A\vdash y:B$ as a consequence of rule $(ax)$.
    Then, $y\isub x\tmtwo=y$, and by rule $(ax)$, $\Gamma,y:B\vdash y:B$.
  \item Let $\Gamma,x:A\vdash\tm:B$ as a consequence of $\Gamma,x:A\vdash\tm:C$,
    $C\equiv B$ and rule $(\equiv)$. Then, by \ih,
    $\Gamma\vdash\tm\isub x\tmtwo:C$, so, by rule $(\equiv)$,
    $\Gamma\vdash\tm\isub x\tmtwo:B$.
  \item Let $\Gamma,x:A\vdash\lambda y.\tm:B\Rightarrow C$ as a consequence of
    $\Gamma,x:A,y:B\vdash \tm:C$ and rule $(\Rightarrow_i)$. Then, by \ih, $\Gamma,y:B\vdash\tm\isub x\tmtwo:C$, so, by rule
    $(\Rightarrow_i)$, $\Gamma\vdash\lambda y.\tm\isub x\tmtwo:B\Rightarrow C$.
    Notice that $\lambda y.\tm\isub x\tmtwo = (\lambda y.\tm)\isub x\tmtwo$.
  \item Let $\Gamma,x:A\vdash tr:B$ as a consequence of $\Gamma,x:A\vdash
    t:C\Rightarrow B$, $\Gamma,x:A\vdash r:C$, and rule $(\Rightarrow_e)$.
    Then, by \ih, $\Gamma\vdash t\isub x\tmtwo:C\Rightarrow
    B$ and $\Gamma\vdash r\isub x\tmtwo:C$, so, by rule $(\Rightarrow_e)$,
    $\Gamma\vdash t\isub x\tmtwo r\isub x\tmtwo:B$. Notice that $t\isub x\tmtwo
    r\isub x\tmtwo=(tr)\isub x\tmtwo$.
  \item Let $\Gamma,x:A\vdash\pair{\tm_1}{\tm_2}:B_1\wedge B_2$ as a
    consequence of $\Gamma,x:A\vdash\tm_i:B_i$, $i=1,2$, and rule $(\wedge_i)$.
    Then, by \ih, $\Gamma\vdash\tm_i\isub x\tmtwo:B_i$, so,
    by rule $(\wedge_i)$, $\Gamma\vdash\pair{\tm_1\isub x\tmtwo}{\tm_2\isub x\tmtwo}:B_1\wedge B_2$. Notice that $\pair{\tm_1\isub x\tmtwo}{\tm_2\isub x\tmtwo}=\pair{\tm_1}{\tm_2}\isub x\tmtwo$.
  \item Let $\Gamma,x:A\vdash\pi_1\tm:B$ as a consequence of
    $\Gamma,x:A\vdash\tm:B\wedge C$ and rule $(\wedge_{e_1})$.
    Then, by \ih, $\Gamma\vdash\tm\isub x\tmtwo:B\wedge C$,
    so, by rule $(\wedge_{e_1})$, $\Gamma\vdash\pi_1(\tm\isub x\tmtwo):B$.
    Notice that $\pi_1(\tm\isub x\tmtwo)=\pi_1\tm\isub x\tmtwo$.
  \item Let $\Gamma,x:A\vdash\pi_2\tm:B$ as a consequence of
    $\Gamma,x:A\vdash\tm:B\wedge B$ and rule $(\wedge_{e_1})$. Analogous to
    previous case.
    \qedhere
  \end{itemize}
\end{proof}

\end{document}